\def\H{\mathcal H}
\def\P{\mathbb P}
\def\v{\mathfrak v}
\def\Tr{{\rm Tr}}
\def\d{\mathrm{d}}
\def\F{\mathcal F}
\def\S{\mathcal S}
\def\bl{\{\!\{}
\def\br{\}\!\}^B}
\def\.{ERROR}
\def\hsd{{\textstyle \frac{\hb}{2}}}
\def\rhsd{{\textstyle \frac{\sqrt{\hb}}{2}}}
 \def\X{\mathcal X} \def\C{\mathbb{C}} 
 \def\R{\mathbb{R}}
\def\L{\mathcal L}
\def\FF{{\mathfrak F}}
\def\Rep{\mathfrak{Rep}^A_{\hb}}
\def\Repp{\mathfrak{Rep}^A_{\hb'}}
\def\I{{\rm 1\kern-.26em I}}
\def\Op{\mathfrak{Op}^A_\hb}
\def\Be{\mathfrak{B}^A_{\hb}}
\def\De{\mathfrak{D}^A_{\hb}}
\def\Dep{\mathfrak{D}^A_{\hb'}}
\def\1{\mathfrak{1}}
\def\0{\mathfrak{0}}
 \def\hb{\hbar}
\newtheorem{lemma}{Lemma}[section]
\newtheorem{corollary}[lemma]{Corollary}
\newtheorem{theorem}[lemma]{Theorem}
\newtheorem{proposition}[lemma]{Proposition}
\newtheorem{definition}[lemma]{Definition}
\newtheorem{remark}[lemma]{Remark}
\numberwithin{equation}{section}
\begin{document}

\title{Positive Quantization in the Presence of a Variable Magnetic Field.}

\author{Marius M\u antoiu$\,^1$, Radu Purice$\,^2$ and Serge Richard$\,^3$}

\date{\small}
\maketitle

\begin{quote}
\begin{itemize}
\item[$^1$] Departamento de Matem\'aticas, Universidad de Chile, Las Palmeras 3425, Casilla 653,
Santiago, Chile, \\
Email: {\tt mantoiu@imar.ro}
\item[$^2$] Institute
of Mathematics Simion Stoilow of the Romanian Academy, P.O.  Box
1-764, Bucharest, \\ RO-70700, Romania,
Email: {\tt purice@imar.ro}
\item[$^3$] Graduate School of Pure and Applied Sciences,
University of Tsukuba, 1-1-1 Tennodai, Tsukuba,
Ibaraki 305-8571, Japan;
on leave from Universit\'e de Lyon; Universit\'e
Lyon 1; CNRS, UMR5208, Institut Camille Jordan, 43 blvd du 11 novembre 1918, F-69622
Villeurbanne-Cedex, France, \\
Email: {\tt richard@math.univ-lyon1.fr}
\end{itemize}
\end{quote}

\begin{abstract}
Starting with a previously constructed family of coherent states, we introduce the Berezin quantization
for a particle in a variable magnetic field and we show that it constitutes a strict quantization of a natural
Poisson algebra. The phase-space reinterpretation involves a magnetic version of the
Bargmann space and leads naturally to Berezin-Toeplitz operators.
\end{abstract}

\textbf{2000 Mathematics Subject Classification:}  81S10, 46L65.

\textbf{Key Words:}  Magnetic field, pseudodifferential operator,
coherent state, Berezin operator, Bargmann transform


\section*{Introduction}\label{intro}

The mathematical literature treating Berezin-Toeplitz operators in phase space
(also called anti-Wick operators, localization operators, etc)
and their connection with the pseudodifferential calculus in Weyl or Kohn-Nirenberg form is huge; we only cite some basic
references as \cite{BC1,BC2,Fo,Ha,La3,Sh}. One of the important {\it raison d'\^ etre} of this type of operators is the fact
that they realize a quantization of certain classes of physical systems, the one consisting of a spinless non-relativistic
particle being the basic one, a paradigm for the quantization of other systems. The Berezin-Toeplitz correspondence, sending
classical observables (functions on phase-space) to quantum ones (self-adjoint operators in some Hilbert space), while
inferior to the Weyl correspondence from the point of view of composition properties, has the advantage of being positive,
sending positive functions
into positive operators. It is also very often handier for norm-estimates.

It is now known that if the particle is  placed in a variable magnetic field, the Weyl form of the pseudodifferential calculus
should be modified to insure gauge covariance and to
cope with the changes in geometry and kinematics due to the presence of the magnetic field.
Recent publications \cite{IMP,IMP',KO1,LMR,MP1,MP2,MP4,MPR1,MPR2,Mu} introduced and developed a
mathematical formalism for the observables naturally associated with such a system, both in a classical and in a
quantum framework. The changes involve mathematical
objects as group $2$-cocycles with values in algebras of functions, twisted dynamical systems and twisted crossed product
$C^*$-algebras as well as an enlargement of the Weyl calculus,
so their interest is not only related to the study of physical systems in magnetic fields.
Recently, the magnetic calculus has been extended to the case of nilpotent Lie groups \cite{BB,BB3,BB4}.

Aside the quantization of observables, one must also perform the quantization of states.
A convenient
systematization of this topic is an axiomatic framework which can be found in \cite{La3}, see also
\cite{La1,La2}; it relies on seeing both the classical
and the quantum pure states as forming Poisson spaces with a transition probability.
The pure states of a classical particle are the points of the phase space $\Xi$
(the symplectic form (\ref{plectica}) takes the magnetic field into account).
On another hand, the pure states space of $\mathbb K(\H)$ (the $C^*$-algebra of all the compact operators in the Hilbert
space $\H$) is homeomorphic to the projective space $\P(\H)$. The latter space is also endowed with the $\hb$-dependent
Fubini-Study symplectic form.

Being guided by general prescriptions \cite{AAGM,Ha,KNW,La3,La4}, we defined in \cite{MPR3} a family of pure
states (called {\it magnetic coherent states}), indexed by the points of the phase space and by Planck's constant $\hb$.
They satisfy certain structural requirements and a prescribed behavior in the limit $\hb\rightarrow 0$.
We would like now to complete the picture, indicating the appropriate modifications needed to obtain {\it magnetic Berezin
operators} associated with the choice of a vector potential.
The present article outlines this topic in the setting of quantization theory, but we hope to use the
formalism in the future for concrete spectral problems involving magnetic operators.

Our first section contains a brief recall of the magnetic Weyl calculus both in pseudodifferential and in twisted
convolution form, as well as a short description of the magnetic coherent states.

In the second section the magnetic Berezin quantization is defined on functions and distributions
and its basic properties are studied. It is its fate to be
(completely) positive, but in addition it has the
important property of being {\it gauge covariant}: vector potentials corresponding to the same magnetic field lead to
unitarily equivalent Berezin operators. We study the connection with magnetic Weyl operators and show that the two quantizations
are equivalent in the limit $\hb\rightarrow 0$. The magnetic Berezin-to-Weyl map depends intrinsically on
the magnetic field $B$ and not on the choice of a corresponding potential $A$ satisfying $\d A=B$.
A very convenient setting is obtained after making a unitary transformation, which is a generalization of the classical
Bargmann transformation. The associated Bargmann-type space is a Hilbert space with reproducing kernel
and in this representation the Berezin quantization will consist of Toeplitz-type operators.
The standard Bargmann transform is build on Gaussian coherent states and this has certain advantages,
among which we quote a well-investigated holomorphic setting. The presence of the magnetic field
seems to ruin such a possibility,
so we are not going to privilege any a priori choice. It is also likely that the anti-Wick setting, involving creation and
annihilation operators and a certain type of ordering, is no longer available for variable magnetic fields.

In the last section we prove that our framework provides a strict quantization of a natural
Poisson algebra in the sense of Rieffel.
One extends in this way some of the results of \cite{BLU,Co1}, see also \cite{Co2,CX,La3}.
We prove essentially that the $\hb$-depending  magnetic Berezin operators: (i) have continuously varying operator norms,
(ii) multiply between themselves "in a classical commutative way" in the limit $\hb\rightarrow 0$, (iii) have commutators
which are governed by a magnetic Poisson bracket in the first order in $\hb$. To do this we use similar results proved in
\cite{MP2} for the magnetic Weyl calculus as well as the connection between the two magnetic quantizations that has been
obtained in the second section.
We notice that our procedure is not a {\it deformation} quantization in some obvious way, but this is not specific to the
magnetic case. For the general theory of strict quantization and for many examples we refer to \cite{La3,Ri1,Ri3,Ri4}.

\medskip
{\bf Acknowledgements:} M. M\u antoiu is partially supported by {\it N\'ucleo Cient\' ifico ICM P07-027-F
"Mathematical Theory of Quantum and Classical Magnetic Systems"} and by the Chilean Science Fundation {\it Fondecyt}
under the grant 1085162.
R. Purice was supported by CNCSIS under the Ideas Programme, PCCE Project No.
55/2008 Sisteme diferentiale in analiza neliniara si aplicatii.
S. Richard was supported by the Swiss National Science Foundation and is now supported by the Japan Society for the Promotion of Sciences.

\section{Recall of previous constructions and results}\label{geometroy}

We start by briefly reviewing the geometry of the classical system with a variable magnetic field \cite{MP2,MR},
the structure of the twisted (magnetic) calculus \cite{IMP,KO1,MP1,MPR1,Mu}
and the natural form of the magnetic coherent states \cite{MPR3}.
Details and developments are also included in the references cited above.

\subsection{The geometry of the classical system with a variable magnetic field}\label{geometry}

The particle evolves in the Euclidean space $\X:=\R^N$ under the influence of a smooth magnetic field,
which is a closed $2$-form $B$ on $\X$ ($\d B=0$), given by matrix-components
$$
B_{jk}=-B_{kj}:\X\rightarrow \R\qquad j,k=1,\dots,N.
$$
The phase space is denoted by $\Xi:=T^*\X\equiv\X\times\X^*$, where $\X^*$ the dual space of $\X$; systematic notations as
$X=(x,\xi)$, $Y=(y,\eta)$, $Z=(z,\zeta)$ will be used for its points.

The classical observables are given by real smooth functions on $\Xi$. They form a real vector space, which is also a
Poisson algebra under the usual pointwise product $(f\cdot g)(X)\equiv(fg)(X):=f(X)g(X)$ and the Poisson bracket
\begin{equation*}
\{f,g\}^B=\sum_{j=1}^N(\partial_{\xi_j}f\,\partial_{x_j}g-\partial_{\xi_j}g\,\partial_{x_j}f)+
\sum_{j,k=1}^NB_{jk}(\cdot)\,\partial_{\xi_j}f\,\partial_{\xi_k}g.
\end{equation*}
For further use, we notice that $\{\cdot,\cdot\}^B$ is canonically generated by the symplectic form
\begin{equation}\label{plectica}
(\sigma^B)_X(Y,Z)=z\cdot \eta -y\cdot\zeta+B(x)(y,z)=\sum_{j=1}^N(z_j\;\!\eta_j-y_j\;\!\zeta_j)+
\sum_{j,k=1}^NB_{jk}(x)\;\!y_j\;\!z_k,
\end{equation}
obtained by adding to the standard symplectic form
\begin{equation*}
\sigma(X,Y)\equiv\sigma[(x,\xi),(y,\eta)]:=y\cdot \xi-x\cdot\eta
\end{equation*}
a magnetic contribution.

\subsection{The structure of the magnetic pseudodifferential calculus}\label{struc}

The intrinsic way to turn to the quantum counter-part is to deform the pointwise product $fg$ into a non-commutative
product $f\sharp^B_\hb g$ depending on the magnetic field $B$ and the Planck constant $\hb$. This is given by
\begin{eqnarray}\label{composition}
\left(f\sharp^B_{\hb} g\right)(X)&:=&(\pi\hb)^{-2N}\int_\Xi
\int_\Xi \d Y \;\!\d Z\;\!e^{-\frac{2i}{\hb} \sigma(X-Y,X-Z)}\;\!
 e^{-\frac{i}{\hb}\Gamma^B\langle x-y+z,y-z+x,z-x+y\rangle}\;\!f(Y)\;\!g(Z)
\end{eqnarray}
and it involves fluxes of the magnetic field $B$ through triangles.
If $a,b,c\in \X$, then we denote by $\langle a,b,c \rangle$ the triangle in $\X$ of vertices $a,b$ and $c$ and set
$$
\Gamma^B\langle a,b,c \rangle:=\int_{\langle a,b,c\rangle}B
$$
for the invariant integration of the $2$-form $B$ through the $2$-simplex $\langle a,b,c\rangle$.
For $B=0$, (\ref{composition}) coincides with the Weyl composition of symbols in pseudodifferential theory.
Also using complex conjugation $f\mapsto\overline f$ as involution, one gets various non-commutative
$^*$-algebras of function on $\Xi$,
some of them also admitting a natural $C^*$-norm; they are regarded as {\it algebras of magnetic quantum observables}.

The full formalism also involves families of representations of these $^*$-algebras in the Hilbert space
$\H:=L^2(\X)$. They are defined by circulations $\Gamma^A[x,y]:=\int_{[x,y]}A$ of vector potentials $A$
through segments $[x,y]:=\{ty+(1-t)x\mid t\in[0,1]\}$ for any $x,y\in\X$.
We recall that, being a closed $2$-form in $\X=\R^N$, the magnetic field is exact:
it can be written as $B=\d A$ for some $1$-form $A$.
For such a vector potential $A$, we define
\begin{equation}\label{op}
\big[\Op(f)u\big](x):=(2\pi\hb)^{-N}\int_\X \int_{\X^*}\d y\;\! \d\eta \;\!
e^{\frac{i}{\hb}(x-y)\cdot\eta}\;\!
e^{-\frac{i}{\hb}\,\Gamma^A[x,y]}\;\!{\textstyle f\left(\frac{x+y}{2},\eta\right)}\;\!u(y).
\end{equation}
If $A=0$, one recognizes the Weyl quantization, associating with functions or distributions on $\Xi$ linear
operators acting on function spaces on $\X$. For suitable functions $f,g$, one proves
\begin{equation*}
\Op(f)\Op(g)=\Op(f\sharp^B_\hb g),\ \ \ \ \ \Op(f)^*=\Op(\overline f).
\end{equation*}

The main interpretation of the operators defined in (\ref{op}) is given by the formula
$\mathfrak{Op}^{A}_\hb(f)=f\left(Q;\Pi^A_\hb\right)$, where $f\left(Q;\Pi^A_\hb\right)$ should be regarded
as the function $f$ applied to the family of non-commuting self-adjoint operators
$$
\big(Q,\Pi^A_\hb\big)\equiv \big(Q_1,\dots,Q_N, \Pi^A_{\hb,1},\dots,\Pi^A_{\hb,N}\big)\ ,
$$
where $Q_j$ is the operator of multiplication by the coordinate function $x_j$ and
$\Pi^A_{\hb,j}:=-i\hb\partial_j-A_j$ is the $j$-th component of the magnetic momentum. They satisfy the commutation relations
\begin{equation}\label{rcm}
i[Q_j,Q_k]=0,\quad i[\Pi^A_{\hb,j},Q_k]=\hb\;\!\delta_{j,k},\quad i[\Pi^A_{\hb,j},\Pi^A_{\hb,k}]=-\hb\;\! B_{jk}\ .
\end{equation}
This stresses the interpretation of our twisted pseudodifferential theory  as a
{\it non-commutative functional calculus} constructed
on the commutation relations (\ref{rcm}). For $A=0$ one gets $\Pi^A_\hb=D_\hb:=-i\hb\nabla$,
so we recover the Canonical Commutation Relations of non-relativistic quantum
mechanics and the standard interpretation of the Weyl calculus.

\subsection{The twisted crossed product representation}\label{tcp}

For conceptual and computational reasons a change of realization is useful; we obtain it composing the mapping
$\Op$ with a partial Fourier transformation. Using the notation $\FF:=1\otimes \F^*$,
where $\F$ is the usual Fourier transform, we define $\Rep(F):=\Op(\FF F)$.
This makes sense for various classes of functions.
Here we record the explicit formula
\begin{equation}\label{SurRep}
[\Rep(F)u](x)= \hb^{-N}\int_\X \d y\;\!e^{-\frac{i}{\hb}\Gamma^A[x,y]}\;\!
{\textstyle F\left(\frac{x+y}{2},\frac{y-x}{\hb}\right)}\;\! u(y).
\end{equation}
Defining
\begin{equation*}
(F\diamond^B_\hb G)(x,y):=
\int_\X \d z \;\!e^{-\frac{i}{\hb}\Gamma^B\langle x-\frac{\hb}{2}y, x-\frac{\hb}{2}y + \hb z, x+\frac{\hb}{2}y \rangle}\;\!
{\textstyle F\left(x-\frac{\hb}{2}(y-z),z\right)\;\!
G\left(x+\frac{\hb}{2}z,y-z\right)}
\end{equation*}
one gets
$$
\Rep(F)\,\Rep(G)=\Rep(F\diamond^B_\hb G).
$$
In \cite{MP2,MPR1,MPR2}, the representation $\Rep$ and the composition law $\diamond^B_\hb$
have been used in connection with the $C^*$-algebraic twisted crossed product
to quantize systems with magnetic fields. We are not going to use this systematically, but only state
two basic results which are useful below: First, both the Schwartz space $\S(\X\times\X)$ and
the Banach space $L^1\big(\X_y;L^\infty(\X_x)\big)$ are stable under the multiplication $\diamond^B_\hb$.
Second, for each $F,G\in L^1\big(\X_y;L^\infty(\X_x)\big)$ one has
\begin{equation}\label{return}
\| \Rep(F)\|\,\le\,\| F\|_{1,\infty}:=\int_\Xi dy\| F(\cdot,y)\|_\infty
\end{equation}
and
\begin{equation*}
\| F\diamond^B_\hb G\|_{1,\infty}\,\le\,\| F\|_{1,\infty}\| G\|_{1,\infty}\,.
\end{equation*}

\subsection{Magnetic coherent states}\label{stric}

Let us fix a unit vector $v\in\H := L^2 (\X)$, and for any $\hb \in I:=(0,1]$
let us define the unit vector $v_\hb\in\H$ by $v_\hb(x):=\hb^{-N/4}v\big(\frac{x}{\sqrt{\hb}}\big)$.
For any choice of a vector potential $A$ generating the magnetic field $B$, we define
\emph{the family of magnetic coherent vectors associated with the pair} $(A,v)$ by
\begin{equation*}
\left[v^A_{\hb}(Z)\right](x)=e^{\frac{i}{\hb}(x-\frac{z}{2}) \cdot\zeta}
\;\!e^{\frac{i}{\hb}\Gamma^A[z,x]}\;\!v_\hb (x-z).
\end{equation*}

The pure state space of the $C^*$-algebra $\mathbb K(\H)$ of compact operators can be identified with
the projective space $\P(\H)$,
so it is natural to introduce for any $Z\in\Xi$ \emph{the coherent states} $\v^A_{\hb}(Z):\mathbb K(\H)\rightarrow\C$ by
$$
\left[\v^A_{\hb}(Z)\right](S):=
\Tr\left(\left\vert v^A_{\hb}(Z)\right> \left< v^A_{\hb}(Z) \right\vert S \right)\equiv
\left< v^A_{\hb}(Z),S\;\!v^A_{\hb}(Z)\right>.
$$

The following statement has been proved in \cite{MPR3}:
\begin{proposition}\label{ax1}
Assume that the components of the magnetic field $B$ belong to $BC^\infty(\X)$ (they are smooth and all the derivatives are
bounded) and let $v$ be an element of the Schwartz space $\S(\X)$, satisfying $\| v\|=1$.
\begin{enumerate}
\item
For any $\hb\in I$ and $u\in\H$ with $\|u\|=1$, one has
\begin{equation}\label{par}
\int_{\Xi}\frac{\d Y}{(2\pi\hb)^N}\;\big|\langle v^A_{\hb}(Y),\,u\rangle\big|^2=1.
\end{equation}
\item
For any $Y,\,Z\in \Xi$, one has
\begin{equation*}
\lim_{\hb\rightarrow 0}\big|\langle v^A_{\hb} (Z),\,v^A_{\hb} (Y)\rangle\big|^2=\delta_{ZY}.
\end{equation*}
\item
If $g:\Xi\rightarrow\C$ is a bounded continuous function and $Z\in \Xi$, one has
\begin{equation}\label{richard}
\lim_{\hb\rightarrow 0}\int_\Xi\frac{\d Y}{(2\pi\hb)^N}\;\!\big|\langle v^A_{\hb}(Z),
v^A_{\hb}(Y)\rangle\big|^2 \,g(Y)\,=\,g(Z) .
\end{equation}
Furthermore, if $g \in \S(\Xi)$ then one has
\begin{equation}\label{ultimita}
\underset{\hb\rightarrow 0}{\lim}\left[\v^A_{\hb} (Z)\right]\left[\mathfrak{Op}^A_\hb(g)\right]=\delta_{Z}(g)=g(Z)\ .
\end{equation}
\end{enumerate}
\end{proposition}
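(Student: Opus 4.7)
The plan is to prove the four statements in turn, exploiting in each case the explicit form of $v^A_\hbar(Y)$ together with an $\hbar$-dependent rescaling of integration variables $x = z + \sqrt{\hbar}\;\!s$ that reduces the small-$\hbar$ limits to standard oscillatory-integral problems. For item (1), the key observation is that
\[
\langle v^A_\hbar(Y),\;\!u\rangle \;=\; e^{\frac{i}{2\hbar}y\cdot\eta}\;\!(\F_\hbar H_y)(\eta), \qquad H_y(x) := e^{-\frac{i}{\hbar}\Gamma^A[y,x]}\;\!\overline{v_\hbar(x-y)}\;\!u(x),
\]
where $\F_\hbar$ denotes the Fourier transform with kernel $e^{-\frac{i}{\hbar}x\cdot\eta}$: the overall phase disappears in modulus squared, Plancherel in $\eta$ gives $\int|\langle v^A_\hbar(Y),u\rangle|^2\,d\eta/(2\pi\hbar)^N=\int|v_\hbar(x-y)|^2|u(x)|^2\,dx$, and Fubini in $y$ together with $\|v_\hbar\|=1$ concludes.

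For (2), the diagonal value $|\langle v^A_\hbar(Z),v^A_\hbar(Z)\rangle|^2 = 1$ is immediate. When $y \neq z$, the rescaling $x = z+\sqrt{\hbar}\;\!s$ introduces a factor $v\big(s+(z-y)/\sqrt{\hbar}\big)$ whose argument escapes to infinity pointwise, and $|v(s)|\;\!\|v\|_\infty$ is an integrable majorant, so dominated convergence yields the limit $0$. When $y=z$ but $\eta\neq\zeta$, the magnetic phases $\Gamma^A[z,x]$ cancel exactly and what remains is a rescaled Fourier transform of $|v|^2$ evaluated at $(\eta-\zeta)/\sqrt{\hbar}$, which vanishes by Riemann--Lebesgue.

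For (3a), I would apply (1) to $u = v^A_\hbar(Z)$ in order to center the integral and reduce matters to $I_\hbar := \int|\langle v^A_\hbar(Z),v^A_\hbar(Y)\rangle|^2(g(Y)-g(Z))\,dY/(2\pi\hbar)^N$, then split this according to a small neighborhood $U$ of $Z$ on which $|g-g(Z)|<\varepsilon$ and its complement. The $U$-piece is bounded by $\varepsilon$ thanks to (1); the tail over $U^c$ is handled via the rescaling $Y = Z+\sqrt{\hbar}\;\!W$, which turns the density into $(2\pi)^{-N}|K^A_\hbar(Z,W)|^2\,dW$, combined with the pointwise convergence of $K^A_\hbar(Z,\cdot)$ to a Schwartz-type kernel $K(Z,\cdot)$ and an $\hbar$-uniform integrable majorant obtained by integration by parts using $v\in\S(\X)$ and $B\in BC^\infty$.

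For (3b), I would insert formula (\ref{op}) into $\langle v^A_\hbar(Z),\Op(g)\;\!v^A_\hbar(Z)\rangle$: the three line-integral phases $\Gamma^A[z,x]$, $\Gamma^A[x,y]$, $\Gamma^A[z,y]$ assemble by Stokes into a single magnetic flux $\pm\;\!\Gamma^B\langle z,x,y\rangle$ through the triangle with vertices $z,x,y$, while the non-magnetic exponential reduces to $e^{\frac{i}{\hbar}(x-y)\cdot(\eta-\zeta)}$. Rescaling $x=z+\sqrt{\hbar}\;\!s$, $y=z+\sqrt{\hbar}\;\!t$, $\eta-\zeta=\sqrt{\hbar}\;\!\omega$, the smoothness of $B$ gives $\hbar^{-1}\Gamma^B\langle z,x,y\rangle\to\tfrac{1}{2}\sum_{j,k}B_{jk}(z)\;\!s_j\;\!t_k$ and $g\to g(Z)$ by continuity; the $\omega$-integration produces $(2\pi)^N\delta(s-t)$, which annihilates the magnetic factor by antisymmetry and leaves $g(Z)\int|v(s)|^2\,ds = g(Z)$. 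The main technical obstacle in both (3a) and (3b) is producing the $\hbar$-uniform integrable majorant that legitimates dominated convergence after rescaling; this will be secured by repeated integration by parts in the oscillating Fourier variable, exploiting the Schwartz decay of $v$ and the uniform smoothness of $B$.
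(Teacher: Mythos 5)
The paper does not actually prove Proposition~\ref{ax1}; the statement is quoted verbatim from \cite{MPR3}, and no internal proof is given, so there is nothing here to compare your attempt against. That said, your sketch appears correct and follows the route one would expect. Item~1 is Plancherel in the dual variable $\eta$ once the $x$-independent phase $e^{\frac{i}{2\hb}y\cdot\eta}$ is stripped away, followed by Fubini in $y$ and the normalisation $\|v_\hb\|=\|v\|=1$. Item~2 correctly separates the three cases: the diagonal is trivial, $y\ne z$ is handled after the rescaling $x=z+\sqrt{\hb}\,s$ by the Schwartz decay of $v$ and dominated convergence, and $y=z,\ \eta\ne\zeta$ reduces to Riemann--Lebesgue because the magnetic circulations cancel exactly. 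Item~3a uses the normalisation from item~1 to centre the integral and then splits into a small neighbourhood of $Z$ (controlled by the continuity of $g$) and a tail that vanishes after the rescaling $Y=Z+\sqrt{\hb}\,W$. Item~3b correctly assembles $\Gamma^A[z,x]+\Gamma^A[x,y]+\Gamma^A[y,z]=\Gamma^B\langle z,x,y\rangle$ via Stokes and rescales; the leading flux contribution $\hb^{-1}\Gamma^B\langle z,z+\sqrt{\hb}\,s,z+\sqrt{\hb}\,t\rangle\to\tfrac12\sum_{j,k}B_{jk}(z)\,s_j t_k$ is indeed annihilated by antisymmetry once the $\omega$-integration has collapsed the remaining integral onto the diagonal $s=t$.

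The one point you should spell out rather than merely invoke is the uniform integrable majorant that justifies dominated convergence in items~3a and 3b. It rests on a small lemma: the rescaled flux $\hb^{-1}\Gamma^B\langle z,z+\sqrt{\hb}\,s,z+\sqrt{\hb}\,t\rangle$, together with all of its $s$- and $t$-derivatives, is bounded by a polynomial in $(s,t)$ \emph{uniformly in} $\hb\in(0,1]$. This follows from the standard affine parametrisation of the $2$-simplex $\langle z,z+\sqrt{\hb}s,z+\sqrt{\hb}t\rangle$ (which explicitly extracts a factor $\hb$) together with $B_{jk}\in BC^\infty(\X)$, and it is precisely what allows the integration by parts in the oscillatory variable to produce decay in $\omega$ with constants independent of $\hb$. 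Without this lemma stated and verified, the dominated-convergence step is incomplete; with it, your outline is a genuine proof.
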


\section{Magnetic Berezin operators}\label{erezin}

Although not always necessary, for the sake of uniformity, we shall always assume that $v \in \S(\X)$ and that
$B_{jk}\in BC^\infty(\X)$ for $j,k\in \{1,\dots,N\}$. The components of the corresponding vector potentials $A$ will always belong to
$C^\infty_{\rm{pol}}(\X)$, {\it i.e.}~they are smooth and all the derivatives are polynomially bounded.
This can obviously be achieved under our assumption on $B$, and this will facilitate
subsequent computations involving the Schwartz class.

\subsection{The magnetic Berezin quantization}\label{subiectul}

The following is an adaptation of \cite[Def.~II.1.3.4]{La3}:

\begin{definition}
\emph{The magnetic Berezin quantization associated with the set of coherent states
$\big\{\v^A_\hb (Z)\mid Z \in \Xi, \hb \in I\big\}$}
is the family of linear mappings $\big\{\Be:L^{\infty} (\Xi)\rightarrow \mathbb B(\H)\big\}_{\hb\in I} $ given for any
$f \in L^\infty(\Xi)$ by
\begin{equation}\label{obs}
\Be(f):=\int_{\Xi}\frac{\d Y}{(2\pi\hb)^N}\;\!f(Y)\;\!\v^A_{\hb}(Y)\ ,
\end{equation}
where $\v^A_{\hb}(Y)$ is seen as the rank one projection $|v^A_\hb(Y)\rangle \langle v^A_\hb(Y)|$.
\end{definition}

Note that for any unit vector $u \in \H$ and for the corresponding element $\mathfrak u \in \P(\H)$ one has
$$
{\mathfrak u}\big( \Be(f)\big)=\Tr \big(|u\rangle \langle u | \Be(f)\big)=\big\langle u,\Be (f)u\big\rangle=
\int_{\Xi}\frac{\d Y}{(2\pi\hb)^N}f(Y)\big|\langle v^A_{\hb}(Y),u\rangle\big|^2\ .
$$
By (\ref{par}), this offers a rigorous interpretation of
\eqref{obs} as a weak integral; it can be regarded as a Bochner integral only under an integrability condition on $f$.
The explicit function $H^A_{\hb,u}(\cdot):=(2\pi\hb)^{-N}\big|\langle v^A_{\hb}(\cdot),u\rangle\big|^2$
deserves to be called {\it the magnetic Husimi function associated to the vector $u$} \cite{Ha,La3}. It is a positive phase space
probability distribution.

\begin{proposition}\label{zzz}
The following properties of the Berezin quantization hold:
\begin{enumerate}
\item $\Be$ is a linear map satisfying $\| \Be (f)\| \leq \| f\|_\infty$, $\forall f\in L^\infty(\Xi)$\,.
\item $\Be$ is positive, {\it i.e.}~for any $f\in L^\infty (\Xi)$ with $f\geq 0$ {\it a.e.}~one has $\Be (f) \geq 0$\,.
\item If $f\in L^1 (\Xi)\cap L^{\infty}(\Xi)$, then $\Be (f)$ is a trace-class operator and
$$
\Tr\big[\Be (f)\big]=\int_\Xi \frac{\d Y}{(2\pi\hb)^N}\;\!f(Y) \,.
$$
\item For any $g\in L^1(\Xi)$ one has
$ \int_\Xi \d Y\;\!\v^A_{\hb}(Y)\big(\Be(g)\big) =\int_\Xi \d Z\,g(Z)$\,.
\item
Let us denote by $C_0(\Xi)$ the $C^*$-algebra of all complex continuous functions on $\Xi$ vanishing at infinity.
Then $\Be\left[C_0 (\Xi)\right]\subset\mathbb K(\mathcal H)$.
\end{enumerate}
\end{proposition}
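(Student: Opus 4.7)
The plan is to reduce every assertion to a Fubini-type interchange of integrals, using the resolution of identity (\ref{par}) from Proposition~\ref{ax1} as the key input. For any $u,w\in\H$, the weak integral definition gives
\begin{equation*}
\langle u,\Be(f)w\rangle \;=\; \int_\Xi \frac{\d Y}{(2\pi\hb)^N}\,f(Y)\,\langle u,v^A_\hb(Y)\rangle\,\langle v^A_\hb(Y),w\rangle,
\end{equation*}
and the whole analysis is anchored in this scalar formula together with the probability-type integrability furnished by (\ref{par}).

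For (i), this display combined with the Cauchy--Schwarz inequality and (\ref{par}) yields $|\langle u,\Be(f)w\rangle|\le \|f\|_\infty\,\|u\|\,\|w\|$, which is exactly the claimed norm bound. Item (ii) is immediate on setting $w=u$, since the integrand $f(Y)\,|\langle v^A_\hb(Y),u\rangle|^2$ is then pointwise non-negative.

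For (iii), each $\v^A_\hb(Y)$ is a rank-one projection, hence of trace norm $1$; absolute integrability of $f$ thus makes $Y\mapsto f(Y)\,\v^A_\hb(Y)$ Bochner-integrable into the trace ideal, so $\Be(f)$ is trace-class. To evaluate the trace, I would fix an orthonormal basis $\{e_n\}$ of $\H$ and interchange sum and integral by Fubini--Tonelli, legitimized by $\sum_n |\langle v^A_\hb(Y),e_n\rangle|^2 = 1$ (whence the majorant $|f(Y)|$ is integrable), obtaining
\begin{equation*}
\Tr\big[\Be(f)\big] \;=\; \int_\Xi \frac{\d Y}{(2\pi\hb)^N}\,f(Y)\sum_n |\langle v^A_\hb(Y),e_n\rangle|^2 \;=\; \int_\Xi \frac{\d Y}{(2\pi\hb)^N}\,f(Y).
\end{equation*}
For (iv), the identity
\begin{equation*}
\v^A_\hb(Y)\big(\Be(g)\big) \;=\; \int_\Xi \frac{\d Z}{(2\pi\hb)^N}\,g(Z)\,|\langle v^A_\hb(Y),v^A_\hb(Z)\rangle|^2
\end{equation*}
is just the expectation value of $\Be(g)$ in the vector state $v^A_\hb(Y)$; integrating over $Y$, swapping the order of integration (justified by Fubini applied to $|g|$ and the bound from (\ref{par})) and applying (\ref{par}) to the inner integral over $Y$ with unit vector $v^A_\hb(Z)$ produces $\int_\Xi g(Z)\,\d Z$.

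Finally, for (v), the inclusion $C_c(\Xi)\subset L^1(\Xi)\cap L^\infty(\Xi)$ together with item (iii) gives $\Be\big[C_c(\Xi)\big]\subset\mathbb K(\H)$, since trace-class operators are compact. As $C_c(\Xi)$ is uniformly dense in $C_0(\Xi)$, and item (i) shows $\Be$ is norm-continuous from $L^\infty(\Xi)$ into $\mathbb B(\H)$, the norm-closedness of $\mathbb K(\H)$ inside $\mathbb B(\H)$ concludes the argument. The only mild subtlety in the proposition is verifying the Fubini interchanges in (iii)--(iv), but each iterated integral we need to swap is dominated by an integral of $|f|$ or $|g|$ against the probability density supplied by (\ref{par}), so no genuine obstacle arises.
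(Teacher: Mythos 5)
Your proof is correct and follows essentially the same route as the paper: the paper declares items 1--4 "straightforward" (referring to the abstract version in Landsman's book and to the resolution of identity \eqref{par}) and proves item 5 by exactly your argument ($C_{\mathrm c}(\Xi)\subset L^1\cap L^\infty$, then density and the bound from item 1); you have simply supplied the omitted details, using \eqref{par} in the way the authors intended.
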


\begin{proof}
Most of the properties are quite straightforward, and they are true in a more abstract setting \cite[Thm.~II.1.3.5]{La3}.
The fourth statement is a simple consequence of (\ref{par}).
By the point 3 one has $\Be[C_{\rm{c}}(\Xi)]\subset\mathbb
K(\mathcal H)$; we denoted by $C_{\rm{c}}(\Xi)$ the space of continuous compactly supported functions on $\Xi$.
This, the point 1 and the density of $C_{\rm{c}}(\Xi)$ in $C_0(\Xi)$
imply that $\Be[C_{0}(\Xi)]\subset\mathbb K(\mathcal H)$.
\end{proof}

\begin{remark}
{\rm To extend the weak definition of $\Be(f)$ to distributions, remark that one can write
\begin{equation*}
\left<u_1,\Be(f)u_2\right>=\int_\Xi \frac{\d Y}{(2\pi\hb)^N}\;\!f(Y)\left[w^A_\hb(u_1,v)\right](Y)
\overline{\left[w^A_\hb(u_2,v)\right](Y)},
\end{equation*}
where
\begin{equation*}
\left[w^A_\hb(u,v)\right](Y):=\left<u,v^A_\hb(Y)\right>=\int_\X \d x\,e^{\frac{i}{\hb}\left(x-y/2\right)\cdot\eta}\,
e^{\frac{i}{\hb}\Gamma^A[y,x]}\,\overline{u(x)}\,v_\hb(x-y)\,.
\end{equation*}
A simple computation shows that $w^A_\hb(u,v)$ is obtained from $\overline u\otimes v$ by applying successively a linear change
of variables, multiplication with a function belonging to $C^\infty_{{\rm pol}}(\X\times\X)$ and a partial Fourier transform.
All these operations are isomorphisms between the corresponding Schwartz spaces, so $w^A_\hb(u,v)\in\S(\Xi)$ if $u,v\in\S(\X)$,
and the mapping $(u,v)\mapsto w^A_\hb(u,v)$ is continuous.
It follows that $w^A_\hb(u_1,v)\,\overline{w^A_\hb(u_2,v)}\in\S(\Xi)$, so one can define for $f\in\S'(\Xi)$
the linear continuous operator $\Be(f):\S(\X)\rightarrow \S'(\X)$ by
\begin{equation*}
\left<u_1,\Be(f)u_2\right>=(2\pi\hb)^{-N}
\left<\overline{w^A_\hb(u_1,v)}\,w^A_\hb(u_2,v),f\right>\,,
\end{equation*}
using in the r.h.s. the duality between $\S(\Xi)$ and $\S'(\Xi)$.}
\end{remark}

An important property that should be shared by any quantization procedure in the presence of a magnetic field is
{\it gauge covariance}. Two vector potentials $A$ and $A'$ which differ only by the differential
$\d\rho$ of a $1$-form (function) will clearly generate the same magnetic field. It is already known \cite{MP1}
that the magnetic Weyl operators $\mathfrak{Op}^A_\hb(f)$ and $\mathfrak{Op}^{A'}_\hb(f)$ are unitarily equivalent.
The next result expresses the gauge covariance of the magnetic Berezin quantization.

\begin{proposition}\label{gc}
If $A'=A+\d\rho$, then $\mathfrak B_\hb^{A'}(f)=e^{\frac{i}{\hb}\rho(Q)}\Be(f)e^{-\frac{i}{\hb}\rho(Q)}$.
\end{proposition}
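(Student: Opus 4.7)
The plan is to reduce the statement to a direct transformation law for the coherent vectors themselves, and then push this law through the weak integral defining $\Be(f)$.

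The starting point is the formula
\begin{equation*}
\left[v^A_{\hb}(Z)\right](x)=e^{\frac{i}{\hb}(x-z/2)\cdot\zeta}\;\!e^{\frac{i}{\hb}\Gamma^A[z,x]}\;\!v_\hb(x-z).
\end{equation*}
Under the gauge change $A'=A+\d\rho$, the only ingredient that is modified is the circulation $\Gamma^{A'}[z,x]=\int_{[z,x]}A'$. By Stokes' theorem applied to the closed $1$-form $\d\rho$ on the segment $[z,x]$,
\begin{equation*}
\Gamma^{A'}[z,x]=\Gamma^A[z,x]+\rho(x)-\rho(z),
\end{equation*}
so I immediately obtain the pointwise identity
\begin{equation*}
\left[v^{A'}_{\hb}(Z)\right](x)=e^{\frac{i}{\hb}\rho(x)}\;\!e^{-\frac{i}{\hb}\rho(z)}\;\!\left[v^A_{\hb}(Z)\right](x).
\end{equation*}

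Interpreted at the Hilbert space level, this reads $v^{A'}_\hb(Z)=e^{-\frac{i}{\hb}\rho(z)}\;\!e^{\frac{i}{\hb}\rho(Q)}\;\!v^A_\hb(Z)$, where $e^{\frac{i}{\hb}\rho(Q)}$ denotes multiplication by the unimodular function $e^{\frac{i}{\hb}\rho(\cdot)}$ on $\X$ (a unitary operator on $\H$ since $\rho$ is real). Taking the corresponding rank-one projection, the scalar factor $e^{-\frac{i}{\hb}\rho(z)}$ and its conjugate cancel, leaving
\begin{equation*}
\left|v^{A'}_\hb(Z)\right\rangle\!\left\langle v^{A'}_\hb(Z)\right|=e^{\frac{i}{\hb}\rho(Q)}\;\!\left|v^A_\hb(Z)\right\rangle\!\left\langle v^A_\hb(Z)\right|\;\!e^{-\frac{i}{\hb}\rho(Q)}.
\end{equation*}

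The conclusion then follows by inserting this identity into the definition \eqref{obs} of $\mathfrak B^{A'}_\hb(f)$ and pulling the (constant in $Y$) unitaries $e^{\pm\frac{i}{\hb}\rho(Q)}$ outside the integral. To make this last manipulation rigorous at the weak level it suffices to test against vectors $u_1,u_2\in\H$, write
\begin{equation*}
\bigl\langle u_1,\mathfrak B^{A'}_\hb(f)u_2\bigr\rangle=\int_\Xi\frac{\d Y}{(2\pi\hb)^N}f(Y)\bigl\langle e^{-\frac{i}{\hb}\rho(Q)}u_1,v^A_\hb(Y)\bigr\rangle\bigl\langle v^A_\hb(Y),e^{-\frac{i}{\hb}\rho(Q)}u_2\bigr\rangle,
\end{equation*}
and recognize the right-hand side as $\bigl\langle u_1,e^{\frac{i}{\hb}\rho(Q)}\Be(f)e^{-\frac{i}{\hb}\rho(Q)}u_2\bigr\rangle$. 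The only conceptual step is the Stokes computation for $\Gamma^{A'}[z,x]$; everything else is bookkeeping, and there is no real obstacle since $\rho$ is smooth (with derivatives of polynomial growth, so that $e^{\frac{i}{\hb}\rho(Q)}$ is a well-defined unitary preserving $\S(\X)$ and the same identity extends, by the remark above, to distributional $f$).
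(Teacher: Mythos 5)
Your proof is correct and follows essentially the same route as the paper's: the key computation $v^{A'}_\hb(Z)=e^{-\frac{i}{\hb}\rho(z)}\,e^{\frac{i}{\hb}\rho(Q)}\,v^A_\hb(Z)$ (which the paper states as "a simple computation gives") is exactly what you derive from $\Gamma^{A'}[z,x]=\Gamma^A[z,x]+\rho(x)-\rho(z)$, and both arguments then cancel the scalar phase in the rank-one projection and push the constant unitaries through the weak integral. The only nit is terminological: you invoke Stokes for $\d\rho$ on a segment, which is just the fundamental theorem of calculus for line integrals (and $\d\rho$ is exact, not merely closed).
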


\begin{proof}
A simple computation gives $v^{A'}_{\hb}(Y)=e^{-\frac{i}{\hb}\rho(y)}e^{\frac{i}{\hb}\rho(Q)}v^A_{\hb}(Y)$ for every $Y\in\Xi$.
Then it follows that
$\v^{A'}_{\hb}(Y)=e^{\frac{i}{\hb}\rho(Q)}\v^A_{\hb}(Y)e^{-\frac{i}{\hb}\rho(Q)}$
and this implies the result.
\end{proof}

\medskip
\noindent
{\bf Some particular cases:}
\begin{enumerate}
\item
Clearly we have for all $Z\in\Xi$
\begin{equation*}
\Be(\delta_Z)=(2\pi\hb)^{-N}|v^A_\hb(Z)\rangle\langle v^A_\hb(Z)|\,,
\end{equation*}
so the coherent states (seen as rank one projections) are magnetic Berezin operators in a very explicit way.
Notice that $\Be(\delta_Z)$ is a compact operator although $\delta_Z$ does not belong to $L^\infty(\Xi)$.

\item
For $f:=\varphi\otimes 1$, with $\varphi:\X\rightarrow\mathbb C$ (polynomially bounded), a simple computation leads to
\begin{equation*}
\left \langle u,\Be(f)u\right \rangle=
\int_\X\int_\X \d x\,\d y\,\varphi(x-\sqrt\hb y)\,|u(x)|^2\,|v(y)|^2\,.
\end{equation*}
Setting $\varphi(x):=x_j \equiv q_j(x,\xi)$ for some $j\in\{1,\dots,N\}$, one gets
\begin{equation*}
\left \langle u,\Be(f)u\right \rangle=\int_\X \d x\,x_j\,|u(x)|^2 -\sqrt{\hb}\,\| u\|^2\int_\X \d y\,y_j\,|v(y)|^2.
\end{equation*}
Thus, if $v$ is even, then $\Be(q_j)=Q_j$. In general we only get this in the limit $\hb\rightarrow 0$.

\item
If we set $f(x,\xi):=\xi_j\equiv p_j(x,\xi)$ then
\begin{eqnarray*}
\left\langle u,\Be(p_j)u\right\rangle
&=&\int_\X\int_\X \d x\,\d y\,\partial_{x_j}\left\{\Gamma^A[x,x-\sqrt\hb y]\right\}|u(x)|^2\,|v(y)|^2
+ i\sqrt{\hb} \, \|u\|^2 \int_\X \d y [\partial_j v](y)\overline{v(y)} \\
&&+ i\hb\int_\X \d x\,\overline{[\partial_j u](x)}\,u(x)
\,.
\end{eqnarray*}
\end{enumerate}

\subsection{Connection with the Weyl quantization}\label{weyl}

A natural question is to find the magnetic Weyl symbol of a Berezin operator.
For computational reasons a change of realization is useful; we obtain it by composing the mapping $\mathfrak B^A_\hb$
with a partial Fourier transformation. Using again the notation $\FF:=1\otimes \F^*$,
where $\F$ is the usual Fourier transform, we define $\mathfrak D^A_\hb(F):=\Be(\FF F)$.
This makes sense for various classes of functions, but we are only going to use them for $F\in\S(\X\times\X)$.
Here we only record the explicit formula
\begin{equation}\label{obsM}
\big[\De(F)u\big](x)=\hb^{-N}\int_\X\int_\X\d y \;\!\d z
\;\!{\textstyle F\left(z,\frac{y-x}{\hb}\right)} \;\!v_\hb(x-z)\;\!\overline{v_\hb(y-z)}
\;\!e^{-\frac{i}{\hb}\Gamma^A[x,z]}\;\!e^{-\frac{i}{\hb}\Gamma^A[z,y]}\;\!u(y)\,.
\end{equation}

It is easier to prove first:

\begin{proposition}\label{sfirsacheM}
For any $F \in \S(\X\times \X)$ one has
$\De(F)=\Rep\big[\Sigma^B_\hb(F)\big]$ with $\Sigma^B_\hb(F)$ given by
\begin{equation}\label{surK}
\big[\Sigma^B_\hb(F)\big](x,y):=\int_\X\d z \;\!
F(x-\sqrt{\hb}\;\!z,y)\;\!
{\textstyle
\overline{v\left(z+\frac{\sqrt{\hb}}{2}y\right)}\;\!v\left(z-\frac{\sqrt{\hb}}{2}y\right)}
\;\!e^{-\frac{i}{\hb}\Gamma^B\langle x-\sqrt{\hb}\;\!z,x+\frac{\hb}{2}y,x-\frac{\hb}{2}y \rangle}
\,.
\end{equation}
The mapping $\Sigma^B_\hb$ extends to a linear contraction of the Banach space $L^1\big(\X_y;C_0(\X_x)\big)$.
\end{proposition}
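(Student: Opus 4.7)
The statement splits into two independent tasks: the algebraic identification $\De(F) = \Rep[\Sigma^B_\hb(F)]$ with the explicit kernel (\ref{surK}), and the norm estimate showing that $\Sigma^B_\hb$ extends as a contraction of $L^1(\X_y; C_0(\X_x))$. My plan is to start from the integral representation (\ref{obsM}) of $\De(F)$ and massage it into the form (\ref{SurRep}) defining $\Rep(G)$, so that one can read off $G = \Sigma^B_\hb(F)$.

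Concretely, in (\ref{obsM}) I would factor out $e^{-\frac{i}{\hb}\Gamma^A[x,y]}$ so that the phase in front of $u(y)$ matches the one in (\ref{SurRep}). What remains is
\[
e^{\frac{i}{\hb}(\Gamma^A[x,y] - \Gamma^A[x,z] - \Gamma^A[z,y])} = e^{\frac{i}{\hb}(\Gamma^A[x,y] + \Gamma^A[y,z] + \Gamma^A[z,x])},
\]
which, since $B = \d A$, equals $e^{\frac{i}{\hb}\Gamma^B\langle x,y,z\rangle}$ by Stokes' theorem (the right-hand side being the circulation of $A$ around the closed loop $x\to y\to z\to x$). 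Next I would change variables $z \mapsto \tilde z$ via $z = \frac{x+y}{2} - \sqrt{\hb}\tilde z$: the Jacobian $\hb^{N/2}$ cancels the $\hb^{-N/2}$ coming from the two $v_\hb$ factors, and setting $X := (x+y)/2$, $Y := (y-x)/\hb$ turns $v_\hb(x-z)\overline{v_\hb(y-z)}$ into $v(\tilde z - \tfrac{\sqrt{\hb}}{2}Y)\overline{v(\tilde z + \tfrac{\sqrt{\hb}}{2}Y)}$. Renaming $(X,Y,\tilde z)\to(x,y,z)$ recovers precisely the $v$-factor in (\ref{surK}); the phase $e^{\frac{i}{\hb}\Gamma^B\langle\cdot,\cdot,\cdot\rangle}$ matches the stated $e^{-\frac{i}{\hb}\Gamma^B\langle\cdot,\cdot,\cdot\rangle}$ after noting that the single transposition of the first and third vertices reverses the orientation of the $2$-simplex, hence negates the flux.

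For the second claim, since the phase has modulus one, the explicit formula (\ref{surK}) gives
\[
\bigl|[\Sigma^B_\hb(F)](x,y)\bigr| \,\leq\, \|F(\cdot,y)\|_\infty \int_\X \d z\, \bigl|v(z - \tfrac{\sqrt{\hb}}{2}y)\bigr|\, \bigl|v(z + \tfrac{\sqrt{\hb}}{2}y)\bigr| \,\leq\, \|F(\cdot,y)\|_\infty,
\]
by Cauchy--Schwarz and $\|v\|=1$. Taking the supremum in $x$ and integrating in $y$ yields $\|\Sigma^B_\hb(F)\|_{1,\infty} \leq \|F\|_{1,\infty}$ on $\S(\X\times\X)$. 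For the extension, the Schwartz $v$-factors furnish an integrable $y$-uniform majorant, so dominated convergence shows that whenever $F(\cdot,y)\in C_0(\X)$ the function $[\Sigma^B_\hb(F)](\cdot,y)$ is continuous and vanishes at infinity; combined with density of $\S(\X\times\X)$ in $L^1(\X_y; C_0(\X_x))$, this promotes the estimate to the full Banach space.

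The principal obstacle I expect is the phase bookkeeping: one must track that the closed-loop identity $\Gamma^A[x,y]+\Gamma^A[y,z]+\Gamma^A[z,x] = \Gamma^B\langle x,y,z\rangle$ combines correctly with the $\sqrt{\hb}$-scale substitution to produce exactly the triangle $\langle x-\sqrt{\hb}z,\,x+\tfrac{\hb}{2}y,\,x-\tfrac{\hb}{2}y\rangle$ with the sign stated in (\ref{surK}). Once this sign and orientation check is carried out, the rest of the argument reduces to routine manipulation of the $v$-factors plus Cauchy--Schwarz and dominated convergence.
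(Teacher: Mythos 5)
Your argument is correct and follows essentially the same route as the paper: both identify $\Sigma^B_\hb(F)$ by comparing \eqref{obsM} with \eqref{SurRep}, use Stokes' theorem to convert the closed-loop circulation of $A$ into a flux of $B$, perform the $\sqrt{\hb}$-scaled change of variables, and then bound $|[\Sigma^B_\hb(F)](x,y)|\leq \|F(\cdot,y)\|_\infty$ (your Cauchy--Schwarz step and the paper's $\|v\|_2^2=1$ are the same thing) before extending by dominated convergence and density. The only stylistic difference is that you are more explicit about the orientation bookkeeping ($\Gamma^B\langle x,y,z\rangle=-\Gamma^B\langle z,y,x\rangle$ via transposition), which the paper leaves implicit in its choice of vertex ordering $\langle z,y,x\rangle$.
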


\begin{proof}
By comparing \eqref{obsM} with \eqref{SurRep} and using Stokes' Theorem to write the sum of three circulations of
$A$ as the flux of $B=\d A$ through the corresponding triangle, one gets:
\begin{eqnarray*}
\big[\Sigma^B_\hb(F)\big]{\textstyle \left(\frac{x+y}{2},\frac{y-x}{\hb}\right)}
&=& \int_\X \;\!\d z
\;\!{\textstyle F\left(z,\frac{y-x}{\hb}\right)} \;\!v_\hb(x-z)\;\!\overline{v_\hb(y-z)}
\;\!e^{-\frac{i}{\hb}\Gamma^A_\hb[x,z]}\;\!e^{-\frac{i}{\hb}\Gamma^A_\hb[z,y]}\;\!e^{-\frac{i}{\hb}\Gamma^A_\hb[y,x]} \\
&=&\int_\X \;\!\d z
\;\!{\textstyle F\left(z,\frac{y-x}{\hb}\right)} \;\!v_\hb(x-z)\;\!\overline{v_\hb(y-z)}
\;\!e^{-\frac{i}{\hb}\Gamma^B\langle z,y,x\rangle}\ .
\end{eqnarray*}
Then, some simple changes of variables lead to the above expression.

For any $x,y\in\X$ one clearly has
\begin{equation}\label{baba}
\big|\big[\Sigma^B_\hb(F)\big](x,y)\big|
\leq  \|F(\cdot,y)\|_{\infty} \;\! \|v\|_{2}^2=\|F(\cdot,y)\|_{\infty}\,,
\end{equation}
so $\Sigma^B_\hb$ is a contraction if on $\S(\X\times\X)$ we consider the norm of $L^1\big(\X_y;L^\infty(\X_x)\big)$.
The function $x\mapsto\left[\Sigma_\hb^B(F)\right](x,y)$ is continuous and vanishes as $x \to \infty$,
 by an easy application of the Dominated Convergence Theorem.
We conclude that $\Sigma^B_\hb[\S(\X\times\X)]\subset L^1\big(\X_y;C_0(\X_x)\big)$.
Also using (\ref{baba}) and the density of $\S(\X\times\X)$ in $L^1\big(\X_y;C_0(\X_x)\big)$,
this completes the proof of the statement.
\end{proof}

\begin{remark}\label{gradina}
{\rm With some extra work, one could show that $\Sigma_\hb^B$ sends continuously $\S(\X\times \X)$ into $\S(\X\times \X)$.
This relies on the assumption $v\in\S(\X)$ and uses
polynomial estimates on the magnetic phase factor in (\ref{surK}), which can be extracted rather easily from the
fact that all the derivatives of $B$ are bounded. We shall not use this result.}
\end{remark}

\begin{corollary}\label{larry}
For any $f \in \S(\Xi)$ one has $\mathfrak B^A_{\hb}(f)=\Op\big[\mathfrak S^B_\hb(f)\big]$, with
\begin{equation*}
\big[\mathfrak S^B_\hb(f)\big](x,\xi):=\int_\X\int_{\X^*}\d z \;\!\d\zeta \;\!
f (x-z,\xi-\zeta)\;\!\Upsilon^B_\hb(x;z,\zeta)
\end{equation*}
and
$$
\Upsilon^B_\hb(x;z,\zeta):=(2\pi)^{-N}\int_{\X}\d y\;\!e^{-iy\cdot\zeta}\;\!{\textstyle
\overline{v_\hb\left(z+\frac{\hb}{2}y\right)}\;\!v_\hb\left(z-\frac{\hb}{2}y\right)} \;\!
e^{-\frac{i}{\hb}\Gamma^B\langle x-z, x+\frac{\hb}{2}y,x-\frac{\hb}{2}y\rangle}
\ .
$$
\end{corollary}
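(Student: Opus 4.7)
The strategy is to use that $\Op$ and $\Be$ are conjugate to $\Rep$ and $\De$, respectively, through the same partial Fourier transform $\FF$. Since by definition $\De(F)=\Be(\FF F)$ and $\Rep(F)=\Op(\FF F)$, Proposition~\ref{sfirsacheM} reads $\Be(\FF F)=\Op\bigl(\FF[\Sigma^B_\hb(F)]\bigr)$ for every $F\in\S(\X\times\X)$. Substituting $F:=\FF^{-1}f$, which is legitimate since $\FF$ is an automorphism of $\S(\X\times\X)$, yields $\Be(f)=\Op\bigl(\mathfrak S^B_\hb(f)\bigr)$ with the identification $\mathfrak S^B_\hb:=\FF\circ\Sigma^B_\hb\circ\FF^{-1}$.

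To identify this composition with the integral operator stated in the corollary, I would write $(\FF^{-1}f)(x,y)$ as a partial Fourier integral of $f$ in the second variable, substitute it into \eqref{surK}, and then apply $\FF$ in $y$. The change of variable $\tilde z=\sqrt{\hb}\,z$ converts $\overline{v(z+\tfrac{\sqrt\hb}{2}y)}\,v(z-\tfrac{\sqrt\hb}{2}y)$, up to the correct power of $\hb$, into $\overline{v_\hb(\tilde z+\tfrac{\hb}{2}y)}\,v_\hb(\tilde z-\tfrac{\hb}{2}y)$, and it turns the flux through $\langle x-\sqrt\hb z,\,x+\tfrac{\hb}{2}y,\,x-\tfrac{\hb}{2}y\rangle$ into the flux through $\langle x-\tilde z,\,x+\tfrac{\hb}{2}y,\,x-\tfrac{\hb}{2}y\rangle$, exactly the magnetic phase appearing in $\Upsilon^B_\hb$. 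After the further change $\zeta=\xi-\eta$, what remains is precisely the convolution-type integral $\int\!\int\d z\,\d\zeta\,f(x-z,\xi-\zeta)\,\Upsilon^B_\hb(x;z,\zeta)$ claimed in the statement.

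The whole computation is not obstructed by any real difficulty; the delicate point is the bookkeeping of the two successive Fourier transforms (the sign of $\zeta$ in $\Upsilon^B_\hb$ being fixed by the convention implicit in $\FF$), the powers of $\hb$ coming from the rescaling $v_\hb(x)=\hb^{-N/4}v(x/\sqrt\hb)$, and the Jacobian of $\tilde z=\sqrt\hb\, z$. Fubini is freely applicable: $f\in\S(\Xi)$ entails $\FF^{-1}f\in\S(\X\times\X)$, the Schwartz decay of $v$ produces absolute convergence in $z$, and the magnetic phase factor is unimodular.
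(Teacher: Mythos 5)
Your proposal is correct and follows precisely the route the paper takes: the paper's proof is the one-line remark that $\mathfrak S^B_\hb(f)=\FF\big[\Sigma^B_\hb(\FF^{-1}f)\big]$, which is immediate from the definitions $\De=\Be\circ\FF$ and $\Rep=\Op\circ\FF$ together with Proposition~\ref{sfirsacheM}, followed by "a straightforward computation." Your first paragraph is exactly that identification, and your second paragraph supplies the bookkeeping (Fourier integrals, the rescaling $\tilde z=\sqrt\hb\,z$ with its Jacobian $\hb^{N/2}$ cancelling the factor $\hb^{-N/2}$ from $v_\hb$, the shift $\zeta=\xi-\eta$) that the paper leaves implicit.
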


\begin{proof}
Our previous definitions imply that $\mathfrak S^B_\hb(f)=\mathfrak F\left[\left(\Sigma^B_\hb(\mathfrak F^{-1}f)\right)\right]$.
The Corollary follows from Proposition \ref{sfirsacheM} by a straightforward computation.
\end{proof}

\begin{remark}\label{mamarc}
{\rm It is satisfactory that  $\mathfrak S^B_\hb(f)$ depends only on $B$ and not on the vector potential $A$.
If $B=0$ (or if it is constant), then $\Upsilon^B_\hb$ does not depend on $x$ and the operation $\mathfrak S_\hb^0$
is just a convolution, as expected.}
\end{remark}

Let us turn now to the study of the $\hb\rightarrow 0$ behavior of the magnetic Berezin quantization. We record first the
following simple consequence of (\ref{richard}) and (\ref{ultimita}):

\begin{proposition}\label{iamboss}
For any $X\in\Xi$  and any bounded continuous function $g:\Xi \to \C$, one has
\begin{equation*}
\lim_{\hb\rightarrow 0}\left \langle v^A_{\hb}(X),\left[\mathfrak B^A_{\hb}(g)\right]v^A_\hb(X)\right \rangle=g(X)\ .
\end{equation*}
Furthermore, if $g\in \S(\Xi)$ then
\begin{equation*}
\lim_{\hb\rightarrow 0}\big \langle v^A_{\hb}(X),\big[\mathfrak{Op}^A_\hb(g)-\mathfrak B^A_{\hb}(g)\big]v^A_{\hb}(X)\big \rangle=0.
\end{equation*}
\end{proposition}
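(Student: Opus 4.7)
\medskip
\noindent\textbf{Proof plan.} Both assertions should follow by directly unfolding the definition of $\mathfrak B^A_\hb$ and invoking the last two parts of Proposition \ref{ax1}. The strategy is to recognize that the diagonal matrix element
$\langle v^A_\hb(X), \mathfrak B^A_\hb(g) v^A_\hb(X)\rangle$ is, up to the identification of $\mathfrak v^A_\hb(X)$ with the rank-one projection onto $v^A_\hb(X)$, literally the integral appearing in (\ref{richard}), so nothing more than the third part of Proposition \ref{ax1} is needed.

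For (i), I would first write, using definition (\ref{obs}) and the fact that $g$ is bounded so that the weak integral is absolutely convergent against the finite measure $H^A_{\hb,v^A_\hb(X)}\, dY$,
\begin{equation*}
\bigl\langle v^A_\hb(X), \mathfrak B^A_\hb(g)\, v^A_\hb(X)\bigr\rangle
=\int_\Xi \frac{\d Y}{(2\pi\hb)^N}\, g(Y)\, \bigl|\langle v^A_\hb(X), v^A_\hb(Y)\rangle\bigr|^2.
\end{equation*}
Since $v^A_\hb(X)$ is a unit vector (by the normalization of $v$ and the unitary character of translations and of the phase factors in its definition), the right-hand side is exactly the quantity to which (\ref{richard}) in Proposition \ref{ax1} applies. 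Thus the limit equals $g(X)$, proving (i).

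For (ii), assume now $g\in\mathcal S(\Xi)$. Using the identification $[\mathfrak v^A_\hb(X)](S)=\langle v^A_\hb(X), S\, v^A_\hb(X)\rangle$, the relation (\ref{ultimita}) in Proposition \ref{ax1} gives
\begin{equation*}
\lim_{\hb\to 0}\bigl\langle v^A_\hb(X), \mathfrak{Op}^A_\hb(g)\, v^A_\hb(X)\bigr\rangle=g(X),
\end{equation*}
while (i) applied to $g\in\mathcal S(\Xi)\subset BC(\Xi)$ gives
\begin{equation*}
\lim_{\hb\to 0}\bigl\langle v^A_\hb(X), \mathfrak B^A_\hb(g)\, v^A_\hb(X)\bigr\rangle=g(X).
\end{equation*}
Subtracting the two limits yields (ii).

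There is essentially no obstacle: both statements are direct consequences of Proposition \ref{ax1}, and the only point that deserves a word of care is verifying that $v^A_\hb(X)$ is indeed a unit vector (so the hypothesis $\|u\|=1$ of (\ref{richard}) applies) and that the weak integral definition of $\Be(g)$ from the remark following Proposition \ref{zzz} produces, when sandwiched with $v^A_\hb(X)$, precisely the integral to which (\ref{richard}) refers. Both checks are immediate from the explicit definitions of $v^A_\hb(Z)$ and $\mathfrak v^A_\hb(Z)$.
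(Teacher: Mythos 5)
Your proposal is correct and follows exactly the route the paper intends: the paper itself gives no proof, merely calling the proposition ``a simple consequence of (\ref{richard}) and (\ref{ultimita})''. Your unfolding of $\langle v^A_\hb(X),\mathfrak B^A_\hb(g)v^A_\hb(X)\rangle$ into the integral of (\ref{richard}), and the subtraction argument using (\ref{ultimita}) for the second claim, is precisely that ``simple consequence''.
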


Next we would like to show that the representation $\Op$ and the Berezin quantization
are equivalent in the limit $\hb\rightarrow 0$, thus improving on the second statement of the previous Proposition.
We start with a result that will be used below and that might have some interest in its own. For that purpose, let us set $\overline I:=\{0\}\cup I=[0,1]$ and $\Sigma^B_0:={\rm id}$.

\begin{proposition}\label{own}
The map $\overline I\ni\hb\rightarrow \Sigma^B_\hb\in\mathbb B\left[L^1\big(\X_y;C_0(\X_x)\big)\right]$ is strongly continuous.
\end{proposition}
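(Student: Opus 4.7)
The uniform bound $\|\Sigma^B_\hb\|_{\mathbb{B}[L^1(\X_y;C_0(\X_x))]} \le 1$ holds for all $\hb \in \overline I$ (trivially at $\hb = 0$, and by Proposition \ref{sfirsacheM} otherwise), so combining it with the density of $\S(\X\times\X)$ in $L^1(\X_y;C_0(\X_x))$ a standard three-$\epsilon$ argument reduces strong continuity of $\hb\mapsto\Sigma^B_\hb$ to showing, for each $F\in\S(\X\times\X)$ and each $\hb_0\in\overline I$, that $\Sigma^B_\hb(F)\to\Sigma^B_{\hb_0}(F)$ in $L^1(\X_y;C_0(\X_x))$ as $\hb\to\hb_0$.

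For $\hb_0\in(0,1]$ this is routine: the integrand of (\ref{surK}) depends continuously on $\hb$ pointwise in $(x,y,z)$, while Schwartz decay of $F$ and $v$ combined with boundedness of $B$ yields dominants in $z$ locally uniform in $\hb$, supplemented by the estimates in $x$ and $y$ discussed below. The substantive case is $\hb_0=0$, where $\Sigma^B_0=\mathrm{id}$. Using $\int_\X|v(z)|^2\,\d z = 1$ to rewrite $F(x,y)$ as an integral over $z$, one obtains
\[
[\Sigma^B_\hb(F)](x,y) - F(x,y) = \int_\X \d z\,\Bigl\{ F(x-\sqrt{\hb}\,z,y)\,\overline{v(z+\tfrac{\sqrt{\hb}}{2}y)}\,v(z-\tfrac{\sqrt{\hb}}{2}y)\,e^{-\frac{i}{\hb}\Gamma^B\langle\cdots\rangle} - F(x,y)|v(z)|^2 \Bigr\}.
\]
The key ingredient is the phase estimate: the triangle $\langle x-\sqrt{\hb}\,z,\,x+\tfrac{\hb}{2}y,\,x-\tfrac{\hb}{2}y\rangle$ has edges $\sqrt{\hb}\,z\pm\tfrac{\hb}{2}y$ emanating from its first vertex, whose wedge equals $-\hb^{3/2}(z\wedge y)$; hence its area is $\tfrac12\hb^{3/2}|z\wedge y|$. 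Since $B_{jk}\in BC^\infty(\X)$, this gives
\[
\hb^{-1}\bigl|\Gamma^B\langle x-\sqrt{\hb}\,z,\,x+\tfrac{\hb}{2}y,\,x-\tfrac{\hb}{2}y\rangle\bigr| \le C\,\sqrt{\hb}\,|z|\,|y|,
\]
so $|e^{-\frac{i}{\hb}\Gamma^B\langle\cdots\rangle}-1|\to 0$ uniformly on compacts. Together with $F(x-\sqrt{\hb}z,y)\to F(x,y)$ and $v(z\pm\tfrac{\sqrt{\hb}}{2}y)\to v(z)$, the integrand tends to $0$ pointwise, while Schwartz decay of $F$ and $v$ furnishes an $\hb$-independent integrable dominant in $z$; dominated convergence therefore gives $[\Sigma^B_\hb(F)](x,y)\to F(x,y)$ pointwise.

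To upgrade to $\|\cdot\|_{1,\infty}$ convergence, I would obtain uniform convergence in $x$ for fixed $y$ by splitting the $z$-integration at $|z|=|x|/2$: on $\{|z|\le|x|/2\}$ we have $|x-\sqrt{\hb}z|\ge|x|/2$, so Schwartz decay of $F(\cdot,y)$ gives decay in $x$; on $\{|z|>|x|/2\}$, the bound $|z\pm\tfrac{\sqrt{\hb}}{2}y|\ge|z|-|y|/2$ for $\hb\le 1$ transfers the decay to $v$, making the tail vanish as $|x|\to\infty$ uniformly in $\hb\in[0,1]$. For the outer $\d y$ integration, Cauchy--Schwarz gives $\int_\X |v(z-\tfrac{\sqrt{\hb}}{2}y)\,v(z+\tfrac{\sqrt{\hb}}{2}y)|\,\d z\le\|v\|_2^2=1$, whence
\[
\sup_x\bigl|[\Sigma^B_\hb(F)](x,y)-F(x,y)\bigr| \le 2\,\|F(\cdot,y)\|_\infty,
\]
which lies in $L^1(\X_y)$ since $F\in\S$; a second application of dominated convergence completes the argument.

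The main technical obstacle is identifying the $\hb^{3/2}$ scaling of the triangle's area: the asymmetric appearance of $\sqrt{\hb}$ and $\hb$ in the three vertices is exactly what produces a flux of order $\hb^{3/2}$ rather than $O(\hb)$, so that the magnetic phase is harmless after division by $\hb$. Once this geometric computation is in hand, the remainder is a straightforward cascade of dominated convergence arguments anchored on the contraction property and on Schwartz decay.
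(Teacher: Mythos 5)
Your argument is correct and follows the same overall route as the paper's proof: reduce by density and the contraction bound to showing $\|\Sigma^B_\hb(F)-F\|_{1,\infty}\to 0$ for $F\in\S(\X\times\X)$, rewrite $F$ as an integral against $|v(z)|^2$, extract the $\hb$-independent dominant $2\|F(\cdot,y)\|_\infty\in L^1(\X_y)$, and reduce by dominated convergence in $y$ to proving $\sup_x\bigl|[\Sigma^B_\hb(F)-F](x,y)\bigr|\to 0$ for each fixed $y$. Two small points of comparison: the paper organizes the $\sup_x$ step by splitting the $z$-integral at a fixed radius $r$ and then choosing a compact set in $x$ (so the tail in $z$ is handled by $v\in\S$, the exterior in $x$ by $F\in\S$, and the interior by uniform continuity on compacts), whereas you split at the $x$-dependent radius $|z|=|x|/2$; both work and buy the same thing. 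More substantively, you make explicit the crucial geometric fact that the triangle $\langle x-\sqrt{\hb}z,\,x+\tfrac{\hb}{2}y,\,x-\tfrac{\hb}{2}y\rangle$ has area $O(\hb^{3/2})$, so that after division by $\hb$ the magnetic phase $\Gamma^B/\hb$ is $O(\sqrt{\hb})$ and the oscillatory factor tends to $1$ uniformly on compacts; the paper only asserts that "$F$, $v$ and the magnetic phase factor are all continuous" and that $J_\hb\to 0$ on compacts, leaving this $\hb^{3/2}$ scaling implicit even though it is the heart of why the $1/\hb$ in the exponent is harmless. Your version thus supplies a useful missing detail.
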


\begin{proof} We are going to check that for any $F\in\S(\X\times\X)$, one has
\begin{equation}\label{stigmatt}
\|\Sigma^B_\hb(F)-F\|_{1,\infty}\rightarrow 0\ \ \ {\rm when}\ \ \hb\rightarrow 0.
\end{equation}
By the density of $\S(\X\times\X)$
in $L^1\big(\X_y;C_0(\X_x)\big)$ this will prove the continuity in $\hb=0$, which is the most interesting result.
Continuity in other values $\hb\in I$ is shown analogously and is left as an exercise.

Let us first observe that
\begin{eqnarray*}
&&\big[\Sigma^B_\hb(F)-F\big](x,y)\\
&=& \int_\X\d z \;\!F(x-\sqrt{\hb}\;\!z,y)\;\!
{\textstyle
\overline{v\left(z+\frac{\sqrt{\hb}}{2}y\right)}\;\!v\left(z-\frac{\sqrt{\hb}}{2}y\right)}
\;\!e^{-\frac{i}{\hb}\Gamma^B\langle x-\sqrt{\hb}\;\!z,x+\frac{\hb}{2}y,x-\frac{\hb}{2}y \rangle}
-F(x,y)\\
&=& \int_\X\d z \Big[F(x-\sqrt{\hb}\;\!z,y)\;\!
{\textstyle
\overline{v\left(z+\frac{\sqrt{\hb}}{2}y\right)}\;\!v\left(z-\frac{\sqrt{\hb}}{2}y\right)}
\;\!e^{-\frac{i}{\hb}\Gamma^B\langle x-\sqrt{\hb}\;\!z,x+\frac{\hb}{2}y,x-\frac{\hb}{2}y \rangle}
-F(x,y)\;\!\overline{v(z)}\;\!v(z)\Big]\\
&=:&\int_\X \d z \;\!J_\hb(x,y;z)\ .
\end{eqnarray*}
Furthermore, one clearly has
\begin{equation*}
\big|\big[\Sigma^B_\hb(F)-F\big](x,y)\big|
\,\leq 2 \;\!\|F(\cdot,y)\|_{\infty} \;\! \|v\|_2^2\,=\,2\|F(\cdot,y)\|_{\infty}\,,
\end{equation*}
with the r.h.s.~independent of $x$ and which belongs to $L^1(\X_y)$. It then follows from the Dominated Convergence
Theorem that \eqref{stigmatt} holds if for each $y \in \X$ one has
\begin{equation*}
\lim_{\hb \to 0} \sup_{x\in \X} \big|\big[\Sigma^B_\hb(F)-F\big](x,y)\big| = 0\ .
\end{equation*}

For that purpose, let $r \in \R_+$ and set $B_r$ for the ball centered at $0 \in \X$ and of radius $r$ and $B_r^\bot$
for the complement $\X\setminus B_r$. Observe then that for any fixed $y\in \X$ and for $r$ large enough one has
\begin{eqnarray}\label{1terme}
\nonumber \big|\big[\Sigma^B_\hb(F)-F\big](x,y)\big|
&\leq& \int_{B_r} \d z \;\!|J_\hb(x,y;z)|
+ \int_{B_r^\bot} \d z \;\!|J_\hb(x,y;z)| \\
&\leq& \int_{B_r} \d z \;\!|J_\hb(x,y;z)| +
\|F(\cdot,y)\|_{\infty} \;\! \|v\|_{\infty} \;\!\left(
\|v\|_{L^1(B_{r-|y|/2}^\perp)}+\|v\|_{L^1(B_r^\bot)}\right)\ .
\end{eqnarray}
Clearly, the second term of \eqref{1terme} is independent of $x$ and $\hb$ and can be made arbitrarily small by
choosing $r$ large enough.
The hypothesis $F \in \S(\X\times \X)$ implies that
$\int_{B_r} \d z \;\!|J_\hb(x,y;z)|$ can also be made arbitrarily small (independently of $\hb\in I$) by restricting $x$
to the complement of a large compact subset of $\X$. Since $F$, $v$ and the magnetic phase factor are all continuous,
for $x$ and $z$ restricted to compact subsets of $\X$
the integrant $J_\hb(x,y;z)$ can be made arbitrarily small by choosing $\hb$ small enough. Thus the first
term of \eqref{1terme} also has a vanishing limit as $\hb \to 0$.
\end{proof}

\begin{corollary}\label{whoknowsM}
For any $f\in\S(\Xi)$ one has
\begin{equation*}
\lim_{\hb \to 0}\big\|\Be(f)-\Op(f)\big\|=0\ .
\end{equation*}
\end{corollary}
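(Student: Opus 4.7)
\medskip
\noindent
\textbf{Proof plan for Corollary \ref{whoknowsM}.}

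The plan is to reduce everything to the $L^1(\X_y;C_0(\X_x))$-continuity statement already established in Proposition \ref{own}, exploiting the fact that in the representation $\Rep$ both the Weyl and the Berezin quantization look almost identical.

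First, I would rewrite the difference in the twisted--crossed--product picture. By definition $\Op(f)=\Rep(\FF f)$, while $\Be(f)=\De(\FF f)$ by the very definition of $\De$. Proposition \ref{sfirsacheM} gives $\De(F)=\Rep\bigl[\Sigma^B_\hb(F)\bigr]$ for $F\in\S(\X\times\X)$, so with $F:=\FF f$ (which lies in $\S(\X\times\X)$ since $f\in\S(\Xi)$) one obtains
\begin{equation*}
\Be(f)-\Op(f)\,=\,\Rep\bigl[\Sigma^B_\hb(\FF f)-\FF f\bigr].
\end{equation*}

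Next I would apply the norm estimate (\ref{return}) for the representation $\Rep$, which gives
\begin{equation*}
\bigl\|\Be(f)-\Op(f)\bigr\|\,\le\,\bigl\|\Sigma^B_\hb(\FF f)-\FF f\bigr\|_{1,\infty}.
\end{equation*}
Since $\FF f\in\S(\X\times\X)\subset L^1\bigl(\X_y;C_0(\X_x)\bigr)$, and since Proposition \ref{own} asserts that $\overline I\ni\hb\mapsto\Sigma^B_\hb$ is strongly continuous on $L^1\bigl(\X_y;C_0(\X_x)\bigr)$ with $\Sigma^B_0={\rm id}$, the right-hand side tends to zero as $\hb\to 0$. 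This yields the claim.

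No real obstacle is left: the whole content has been packed into Proposition \ref{sfirsacheM} (identifying the Berezin operator as $\Rep\circ\Sigma^B_\hb\circ\FF$), the contractivity estimate (\ref{return}) (which converts $L^1(\X_y;C_0(\X_x))$-convergence of symbols into operator-norm convergence), and Proposition \ref{own} (the analytic heart of the matter, where the $\hb\to 0$ limit of the magnetic phase factor and of the squeezed vector $v_\hb$ is controlled). The corollary is simply the composition of these three ingredients applied to $F=\FF f$.
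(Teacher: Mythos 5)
Your argument is exactly the one in the paper: you express $\Be(f)-\Op(f)$ as $\Rep\bigl[\Sigma^B_\hb(\FF f)-\FF f\bigr]$ via Proposition~\ref{sfirsacheM} and the isomorphism $\FF:\S(\Xi)\to\S(\X\times\X)$, bound the operator norm by $\|\cdot\|_{1,\infty}$ using (\ref{return}), and invoke Proposition~\ref{own} (i.e.\ (\ref{stigmatt})) to send the right-hand side to zero. Correct, and identical in structure to the published proof.
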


\begin{proof}
By using the notations above, Proposition \ref{sfirsacheM} and the fact that $\FF$ is an isomorphism
from $\S(\Xi)$ to $\S(\X \times \X)$, one has to show for any $F\in\S(\X\times\X)$ that
$$
\lim_{\hb \to 0}\big\|\Rep\big[\Sigma^B_\hb(F)-F\big]\big\|=0\ .
$$
However, by (\ref{return}), this follows from (\ref{stigmatt}).
\end{proof}

\subsection{Operators in the Bargmann representation}\label{Ttz}

We now introduce the generalization to our framework of the Bargmann transform and consider the fate of the Berezin operators
in the emerging realization.
The proofs of the statements bellow are straightforward; most of them are not specific to our magnetic framework, see \cite[Sec.~II.1.5]{La3}.

\begin{definition}
\begin{enumerate}
\item
The mapping
$\mathcal U^A_{\hb} : L^2(\X)\rightarrow L^2_\hb (\Xi)\equiv L^2\left(\Xi;\frac{\d X}{(2\pi\hb)^N}\right)$ given by
$\left(\mathcal U^A_{\hb} u\right)(X):=\left< v^A_{\hb} (X), u\right>$
is called {\rm the Bargmann transformation} corresponding to the family of coherent vectors $\left(v^A_{\hb}
(X)\right)_{X\in\Xi}$.
\item
The subspace $\mathcal K^A_{\hb} :=\mathcal U^A_{\hb} \big(L^2 (\X)\big)\subset L^2_\hb (\Xi)$ is called
{\rm the magnetic Bargmann space} corresponding to the family of coherent vectors $\left(v^A_{\hb} (X)\right)_{X\in \Xi}$.
\end{enumerate}
\end{definition}

First we remark that $\mathcal U^A_{\hb}$ is an isometry with adjoint
$$
\left(\mathcal U^A_{\hb}\right)^* : L^2_\hb (\Xi)\rightarrow L^2(\X),\ \ \left(\mathcal U^A_{\hb}\right)^* \Phi
:=\int_{\Xi}\frac{\d X}{(2\pi\hb)^N}\Phi (X) v^A_{\hb} (X)
$$
and final projection $P^A_{\hb}:=\mathcal U^A_{\hb}\left(\mathcal U^A_{\hb}\right)^*\in\mathbb P [L^2_\hb (\Xi)]$,
with $P^A_{\hb}\big(L^2_\hb (\Xi)\big)=\mathcal K^A_{\hb}$. The integral kernel of this projection
$$
K^A_{\hb} :\Xi\times\Xi\rightarrow\mathbb C,\ \ K^A_{\hb}(Y,Z):=\left< v^A_{\hb} (Y),  v^A_{\hb} (Z)\right>\,,
$$
explicitly equal to
$$
K^A_{\hb}(Y,Z)=e^{\frac{i}{2\hb}(y\cdot\eta-z\cdot\zeta)}\int_\X \d x\,e^{\frac{i}{\hb}x\cdot(\zeta-\eta)}\,
e^{-\frac{i}{\hb}\Gamma^A[y,x]}\,e^{\frac{i}{\hb}\Gamma^A[z,x]}
\,\overline{v_\hb(x-y)}\,v_\hb(x-z)\ ,
$$
is a continuous function and it is a reproducing kernel for $\mathcal K^A_{\hb}$:
$$
\Phi (Y)=\int_{\Xi}\frac{\d Z}{(2\pi\hb)^N} K^A_{\hb} (Y,Z)\Phi (Z),\
\ \forall\,Y\in\Xi,\ \ \forall\,\Phi\in\mathcal K^A_{\hb}.
$$
The magnetic Bargmann space is composed of continuous functions and contains all the vectors $K^A_{\hb} (X,\cdot),\,X\in\Xi$.
The evaluation maps $\mathcal K^A_{\hb}\ni\Phi\rightarrow\Phi(X)\in\mathbb C$ are all continuous.
Furthermore, the set of vectors $\Psi^A_{\hb} (X):=\mathcal U^A_{\hb} \big(v^A_{\hb} (X)\big)$ with $X\in\Xi$ forms a family of coherent states
in the magnetic Bargmann space.

\begin{proposition}\label{plitz}
For $f\in L^\infty (\Xi)$ the operator
$$
\mathfrak T^A_{\hb} (f):=\mathcal U^A_{\hb}\mathfrak B^A_{\hb} (f)
\big(\mathcal U^A_{\hb}\big)^*\equiv \int_{\Xi}\frac{\d Y}{(2\pi\hb)^N}f(Y)|\Psi^A_\hb(Y)\rangle\langle\Psi^A_\hb(Y)|
$$
takes the form of a Toeplitz operator $\mathfrak T^A_{\hb} (f)=P^A_{\hb}M_f P^A_{\hb}$,
where $M_f$ is the operator of multiplication by $f$ in the Hilbert space $L^2_\hb (\Xi)$.
\end{proposition}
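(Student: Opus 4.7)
My plan is to verify the claimed operator identity at the level of sesquilinear forms $\langle\Phi_1,\cdot\,\Phi_2\rangle_{L^2_\hb}$ for arbitrary $\Phi_1,\Phi_2\in L^2_\hb(\Xi)$. Working with matrix elements sidesteps the need to interpret $\Be(f)$ as a strong (Bochner) integral, which for a merely bounded $f$ is only defined in the weak sense by Proposition \ref{zzz} and the remark following it.

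The first step will be to unfold the definition $\mathfrak T^A_\hb(f)=\mathcal U^A_\hb\Be(f)(\mathcal U^A_\hb)^*$ and use the rank-one structure of $\v^A_\hb(Y)=|v^A_\hb(Y)\rangle\langle v^A_\hb(Y)|$ to write
\begin{equation*}
\langle\Phi_1,\mathfrak T^A_\hb(f)\Phi_2\rangle_{L^2_\hb}
=\int_\Xi\frac{\d Y}{(2\pi\hb)^N}\,f(Y)\,\overline{\langle v^A_\hb(Y),(\mathcal U^A_\hb)^*\Phi_1\rangle}\,\langle v^A_\hb(Y),(\mathcal U^A_\hb)^*\Phi_2\rangle.
\end{equation*}
The absolute convergence of this integral, with modulus bounded by $\|f\|_\infty\|\Phi_1\|_{L^2_\hb}\|\Phi_2\|_{L^2_\hb}$ via Cauchy--Schwarz together with the resolution of identity \eqref{par} applied to $(\mathcal U^A_\hb)^*\Phi_j$, justifies this rearrangement.

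The decisive step is the tautology, immediate from the definition of $\mathcal U^A_\hb$,
\begin{equation*}
\langle v^A_\hb(Y),(\mathcal U^A_\hb)^*\Phi_j\rangle=\bigl(\mathcal U^A_\hb(\mathcal U^A_\hb)^*\Phi_j\bigr)(Y)=(P^A_\hb\Phi_j)(Y),\qquad j=1,2.
\end{equation*}
Substituting and exploiting the self-adjointness of the projection $P^A_\hb$, I would conclude
\begin{equation*}
\langle\Phi_1,\mathfrak T^A_\hb(f)\Phi_2\rangle_{L^2_\hb}=\langle P^A_\hb\Phi_1,M_f P^A_\hb\Phi_2\rangle_{L^2_\hb}=\langle\Phi_1,P^A_\hb M_f P^A_\hb\Phi_2\rangle_{L^2_\hb},
\end{equation*}
yielding the desired identity $\mathfrak T^A_\hb(f)=P^A_\hb M_f P^A_\hb$. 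I do not anticipate a genuine obstacle; the whole computation is bookkeeping built on the identity $P^A_\hb=\mathcal U^A_\hb(\mathcal U^A_\hb)^*$ and the bound $\|\Be(f)\|\le\|f\|_\infty$ from Proposition \ref{zzz}(1), the latter ensuring that $\mathfrak T^A_\hb(f)$ is a well-defined bounded operator from the outset.
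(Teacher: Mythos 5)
Your proof is correct and is essentially the same ``simple computation'' that the paper leaves to the reader: one passes to matrix elements, uses the definition $\bigl(\mathcal U^A_\hb u\bigr)(Y)=\langle v^A_\hb(Y),u\rangle$ to recognize $\langle v^A_\hb(Y),(\mathcal U^A_\hb)^*\Phi_j\rangle=(P^A_\hb\Phi_j)(Y)$, and invokes self-adjointness of $P^A_\hb=\mathcal U^A_\hb(\mathcal U^A_\hb)^*$. Your care in justifying the rearrangement via Cauchy--Schwarz and \eqref{par} is a welcome touch, but the argument and its key identity coincide with what the paper intends.
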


\begin{proof}
Simple computation.
\end{proof}

Now, let us denote by ${\left< \cdot\,,\,\cdot \right>} _{(\hb)}$ the
scalar product of the space $L^2_\hb (\Xi)$.

\begin{definition}
{\rm The magnetic covariant symbol} of the operator $S\in\mathbb B
[L^2_\hb (\Xi)]$ is the function
$$
s^A_{\hb} (S):\Xi\rightarrow\mathbb C,\ \ \big[s^A_{\hb} (S)\big](X):=\left<\Psi^A_{\hb} (X),S\Psi^A_{\hb}
(X) \right>_{(\hb)}.
$$
\end{definition}

Of course, this can also be written as
$$
\big[s^A_{\hb} (S)\big](X)=\left< v^A_{\hb} (X), \big(\mathcal U^A_{\hb}\big)^*\,S\mathcal U^A_{\hb} v^A_{\hb} (X)\right>,
$$
which suggests the definition of the magnetic covariant symbol of an operator $T\in \mathbb B[L^2 (\X)]$ to be
$$
\big[t^A_{\hb} (T)\big](X)=\left< v^A_{\hb} (X), T\,v^A_{\hb} (X)\right>=\left[\v^A_{\hb}(X)\right](T).
$$

Sometimes $\mathfrak B^A_{\hb} (f)$ and $\mathfrak T^A_{\hb} (f)$ are called operators with {\it contravariant symbol} $f$.
We avoided the Wick/anti-Wick terminology, since its full significance involving ordering is not clear here.

\begin{remark}
{\rm
A straightforward calculation leads to the covariant symbol of a Toeplitz operator
 $$
\left(s^A_{\hb} \left[\mathfrak T^A_{\hb}
(f)\right]\right)(X)=\int_{\Xi}\frac{\d Y}{(2\pi\hb)^N}f(Y)\,\left\vert\left< v^A_{\hb} (X),v^A_{\hb} (Y) \right>
\right\vert^2.
 $$
The relation (\ref{richard}) shows that {\it the magnetic Berezin transformation}, sending a continuous and bounded function $f$ on $\Xi$ to
$s^A_{\hb} \big[\mathfrak T^A_{\hb} (f)\big]$,
converges to the identity operator when $\hb\rightarrow 0$.
}
\end{remark}

\section{Strict quantization}\label{ueyl}

This section is dedicated to a proof of

\begin{theorem}\label{remaiat}
Assume that $v \in \S(\X)$, that
$B_{jk}\in BC^\infty(\X)$ for $j,k\in \{1,\dots,N\}$ and that a corresponding vector potentials $A$ with components in  $C^\infty_{\rm{pol}}(\X)$ has been chosen. Then, the magnetic Berezin quantization $\mathfrak B^A_{\hb}$ is a strict quantization
of the Poisson algebra $\left(\S(\Xi;\mathbb R),\cdot,\{\cdot,\cdot\}^B\right)$.
\end{theorem}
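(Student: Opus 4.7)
The strategy is to verify Rieffel's three defining conditions for strict quantization by reducing them to the analogous statements for the magnetic Weyl calculus $\Op$ proved in \cite{MP2}. The bridge between the two calculi is provided by Corollary \ref{whoknowsM}, which gives $\|\Be(f)-\Op(f)\|\to 0$ for $f\in\S(\Xi)$, together with the explicit symbol identity $\Be(f)=\Op[\mathfrak{S}^B_\hb(f)]$ from Corollary \ref{larry}. Adopting the standard convention $\mathfrak{B}^A_0(f):=f$ with $\|\cdot\|_0:=\|\cdot\|_\infty$, the three axioms to check are: (i) continuity of $\hb\mapsto\|\Be(f)\|$ on $\overline I$; (ii) asymptotic multiplicativity $\|\Be(f)\Be(g)-\Be(fg)\|\to 0$; and (iii) the Dirac condition $\|(i/\hb)[\Be(f),\Be(g)]-\Be(\{f,g\}^B)\|\to 0$. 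Linearity and the fact that real symbols are sent to self-adjoint operators are immediate from \eqref{obs} and Proposition \ref{zzz}.

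For (i) at $\hb=0$, a triangle-inequality sandwich combines Corollary \ref{whoknowsM} with the corresponding Weyl statement from \cite{MP2} to obtain $\|\Be(f)\|\to\|f\|_\infty$; continuity at an interior point $\hb_0\in(0,1]$ follows in the same spirit, after rewriting $\Be(f)-\mathfrak{B}^A_{\hb_0}(f)$ on the symbol side via Corollary \ref{larry} and invoking the strong continuity $\Sigma^B_\hb\to\Sigma^B_{\hb_0}$ of Proposition \ref{own} together with the Weyl norm continuity from \cite{MP2}. For (ii) I would use the standard splitting
\begin{equation*}
\Be(f)\Be(g)-\Be(fg)=\bigl[\Be(f)-\Op(f)\bigr]\Be(g)+\Op(f)\bigl[\Be(g)-\Op(g)\bigr]+\bigl[\Op(f)\Op(g)-\Op(fg)\bigr]+\bigl[\Op(fg)-\Be(fg)\bigr];
\end{equation*}
the middle bracket vanishes in the limit by the Weyl analogue from \cite{MP2}, and the remaining brackets vanish by Corollary \ref{whoknowsM}, using the uniform $\hb$-boundedness of $\|\Op(f)\|$ and $\|\Be(g)\|$ for Schwartz symbols, already supplied by (i).

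The main obstacle is the Dirac condition, because a naive commutator splitting would demand $\|\Be(f)-\Op(f)\|=o(\hb)$, whereas the sample computation following Proposition \ref{gc} shows the discrepancy is in general only $O(\sqrt\hb)$ for non-even $v$. I would avoid this difficulty by staying on the symbol side: Corollary \ref{larry} allows one to rewrite
\begin{equation*}
\frac{i}{\hb}[\Be(f),\Be(g)]-\Be(\{f,g\}^B)=\Op\!\left(\frac{i}{\hb}\bigl[\mathfrak{S}^B_\hb(f)\sharp^B_\hb\mathfrak{S}^B_\hb(g)-\mathfrak{S}^B_\hb(g)\sharp^B_\hb\mathfrak{S}^B_\hb(f)\bigr]-\mathfrak{S}^B_\hb(\{f,g\}^B)\right),
\end{equation*}
and then appeal to the Weyl Dirac condition of \cite{MP2} applied to the $\hb$-dependent Schwartz symbols $\mathfrak{S}^B_\hb(f)$ and $\mathfrak{S}^B_\hb(g)$. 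The required ingredient from \cite{MP2} is that the Weyl Dirac condition is uniform in families of symbols which are bounded and convergent in the relevant seminorms; convergence $\mathfrak{S}^B_\hb(f)\to f$ is then supplied by the Schwartz variant of Proposition \ref{own} hinted at in Remark \ref{gradina}, so that the $O(\sqrt\hb)$ fluctuations in $\mathfrak{S}^B_\hb(f)-f$ enter the limit only through continuous Poisson brackets on $\S(\Xi)$ and contribute vanishing rather than blowing-up terms under $\Op$.
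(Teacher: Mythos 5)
Your treatment of Rieffel's axiom coincides with the paper's, and your treatment of von Neumann's axiom is correct but takes a slightly different route: you split the operator-side difference $\Be(f)\Be(g)-\Be(fg)$ into four brackets and use Corollary~\ref{whoknowsM} together with the uniform boundedness from the first axiom, whereas the paper works entirely on the symbol side, estimating $\|\Sigma^B_\hb(F)\diamond^B_\hb\Sigma^B_\hb(G)-\Sigma^B_\hb(F\diamond^B_0 G)\|_{1,\infty}$ through the bound $\|\Rep(H)\|\le\|H\|_{1,\infty}$ and Proposition~\ref{own}. Both decompositions are essentially equivalent and both work.

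The genuine gap is in Dirac's axiom. You correctly diagnose the difficulty --- the commutator divides by $\hb$ while $\|\Be(f)-\Op(f)\|$ is only $O(\sqrt\hb)$ for non-even $v$, so a naive four-term splitting fails --- and you correctly pass to the symbol identity of Corollary~\ref{larry}. But the reduction you then attempt requires two unproved inputs. First, you invoke the Weyl Dirac condition of \cite{MP2} applied to the $\hb$-dependent symbols $\mathfrak S^B_\hb(f),\mathfrak S^B_\hb(g)$; this needs uniformity of that limit over a family of symbols converging as $\hb\to 0$, and \cite{MP2} is only cited for fixed Schwartz symbols. Second, to replace $\{\mathfrak S^B_\hb(f),\mathfrak S^B_\hb(g)\}^B$ by $\{f,g\}^B$ you need convergence $\mathfrak S^B_\hb(f)\to f$ in a topology under which the magnetic Poisson bracket (which involves derivatives) is continuous into $L^1(\X_y;C_0(\X_x))$; Proposition~\ref{own} only gives convergence in the $\|\cdot\|_{1,\infty}$ norm, and the Schwartz-class statement is precisely what Remark~\ref{gradina} explicitly declines to prove (``We shall not use this result''). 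The paper's proof takes a different and self-contained route: it computes $\Sigma^B_\hb(F)\diamond^B_\hb\Sigma^B_\hb(G)-\Sigma^B_\hb(G)\diamond^B_\hb\Sigma^B_\hb(F)$ directly, Taylor-expands $F$ and $G$ around the $\hb=0$ arguments, isolates four terms $I_1,\dots,I_4$, proves a dedicated lemma identifying
\begin{equation*}
\lim_{\hb\to 0}\frac{1}{i\hb}\bigl[w_1^B(x,y,z,a,b;\hb)-w_2^B(x,y,z,a,b;\hb)\bigr]=-\sum_{j,k}z_j\,(y_k-z_k)\,B_{jk}(x)
\end{equation*}
for the ratio of magnetic phase factors, and then closes the argument with Dominated Convergence and the normalization $\|v\|_2=1$. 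To make your route rigorous you would have to supply the uniform-family version of the Weyl Dirac condition and the Schwartz-topology continuity/convergence of $\Sigma^B_\hb$; as written, the ``continuous Poisson brackets on $\S(\Xi)$'' sentence asserts rather than establishes the cancellation of the $O(\sqrt\hb)$ fluctuations.
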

In technical terms this means that
\begin{enumerate}
\item
For any real $f\in \S(\Xi)$, the map $I \ni \hb \mapsto \big\|\Be(f)\big\|\in \R_+$ is continuous and it extends continuously
to $\overline I:=[0,1]$ if the value $\| f\|_\infty$ is assigned to $\hb=0$ ({\it Rieffel's axiom}).
\item
For any $f,g \in \S(\Xi)$ the following property holds ({\it von Neumann's axiom}):
\begin{equation*}
\lim_{\hb \to 0}\big\|
{\textstyle \frac{1}{2}} \big[
\Be(f)\;\!\Be(g) + \Be(g)\;\!\Be(f)\big] - \Be(f g)
\big\|=0\ .
\end{equation*}
\item
For any $f,g \in \S(\Xi)$, the following property holds ({\it Dirac's axiom}):
\begin{equation*}
\lim_{\hb \to 0} \big\|\textstyle{\frac{1}{i\hb}}
\big[\Be(f), \Be(g)\big]-\Be\big(\{f,g\}^B\big)\big\|=0\ .
\end{equation*}
\end{enumerate}

Equivalently, we intend to show that the map $\De$ defines a strict quantization of the Poisson algebra
$\big(\S(\X\times\X;\R),\diamond^B_0,\bl\cdot,\cdot\br\big)$, where the product $\diamond^B_0$ and
the Poisson bracket $\bl\cdot,\cdot\br$ are deduced from the Poisson algebra $\big(\S(\Xi;\R),\cdot,\{\cdot,\cdot\}^B\big)$
through the partial Fourier transformation $\FF$.
One obtains easily that
\begin{equation*}
\left(F\diamond^B_0G\right)(x,y)={\textstyle \frac{1}{(\sqrt{2\pi})^N}}
\int_\X \d z\;\!
{\textstyle F\left(x,z\right)\;\!
G\left(x,y-z\right)}\ .
\end{equation*}
Similarly, the Poisson bracket is given by
\begin{equation*}
\bl F,G\br = (2\pi)^{N/2}\Big[
\sum_j \big[(Y_j F) \diamond^B_0 \big({\textstyle \frac{1}{i}}\;\!\partial_{x_j} G\big) -
\big({\textstyle \frac{1}{i}}\;\!\partial_{x_j}x F\big)\diamond^B_0 (Y_j G)\big] - \sum_{j,k} B_{jk}
\;\!(Y_j F)\diamond^B_0 (Y_k G)\Big]
\end{equation*}
with $[Y_j F](x,y) = y_j F(x,y)$ and $[\partial_{x_j} F](x,y)=\frac{\partial F}{\partial x_j}(x,y)$.

Our approach relies on a similar proof \cite{MP2} for the fact that $\Rep$ defines a strict quantization
of the Poisson algebra $\big(\S(\X\times\X;\R),\diamond^B_0,\bl\cdot,\cdot\br\big)$. This and the results of subsection
\ref{weyl} will lead easily to the first two conditions. Dirac's axiom is more difficult to check;
it relies on some detailed calculations and estimates.

\subsection{Rieffel's condition}

The most important information is of course
\begin{equation*}
\underset{\hb\rightarrow 0}{\lim}\| \Be(f)\|\,=\,\| f\|_\infty.
\end{equation*}
This follows easily from the analog relation proved in \cite{MP2} for the magnetic Weyl quantization $\Op$ and from Corollary
\ref{whoknowsM}. For convenience, we treat also continuity outside $\hb=0$.

\begin{proposition}
For any $F\in \S(\X\times \X)$, the map $I \ni \hb \mapsto \big\|\De(F)\big\|\in \R_+$ is continuous.
\end{proposition}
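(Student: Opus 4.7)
The plan is to reduce the continuity of $\hb\mapsto \|\De(F)\|$ to two facts that are already in hand: the strong continuity of $\hb\mapsto\Sigma^B_\hb$ proved in Proposition \ref{own}, and the corresponding Rieffel-type continuity statement $\hb\mapsto\|\Rep(G)\|$ for the magnetic Weyl representation, which is established in \cite{MP2} and invoked at the beginning of this section.

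Fix $\hb_0\in I$. By Proposition \ref{sfirsacheM} one has $\De(F)=\Rep[\Sigma^B_\hb(F)]$ for every $\hb\in I$. I would then insert the cross term $\Rep[\Sigma^B_{\hb_0}(F)]$ and use the reverse triangle inequality to get
\begin{equation*}
\bigl|\,\|\De(F)\|-\|\mathfrak D^A_{\hb_0}(F)\|\,\bigr|
\,\leq\, \bigl\|\Rep[\Sigma^B_\hb(F)-\Sigma^B_{\hb_0}(F)]\bigr\|
\,+\,\bigl|\,\|\Rep[\Sigma^B_{\hb_0}(F)]\|-\|\mathfrak{Rep}^A_{\hb_0}[\Sigma^B_{\hb_0}(F)]\|\,\bigr|.
\end{equation*}
For the first term, the norm bound (\ref{return}) gives control by $\|\Sigma^B_\hb(F)-\Sigma^B_{\hb_0}(F)\|_{1,\infty}$, which tends to zero as $\hb\to\hb_0$ by Proposition \ref{own} (applied at the interior point $\hb_0$, whose proof is indicated there as an analogous exercise). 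This part is entirely routine.

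The second term is where the work sits. The continuity of $\hb\mapsto\|\Rep(G)\|$ for a fixed symbol $G$ is precisely the Rieffel-type statement for the magnetic Weyl quantization proved in \cite{MP2}, and this is exactly the input I would like to use with $G:=\Sigma^B_{\hb_0}(F)$. Two variants of the argument are possible depending on how one wishes to handle the regularity of $G$: either one invokes Remark \ref{gradina}, which guarantees $\Sigma^B_{\hb_0}(F)\in\S(\X\times\X)$ and lets one apply the result from \cite{MP2} directly; or, avoiding that remark, one uses the density of $\S(\X\times\X)$ in $L^1\big(\X_y;C_0(\X_x)\big)$ together with (\ref{return}) to reduce to a Schwartz-class input by a standard $\varepsilon/3$ argument, the two approximating errors being uniform in $\hb$ thanks to the contractivity of every $\Rep$ on $L^1\big(\X_y;L^\infty(\X_x)\big)$.

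The main obstacle, such as it is, is exactly this regularity issue for $\Sigma^B_{\hb_0}(F)$: one must confirm that the MP2 continuity statement applies to the particular symbol produced at step $\hb_0$. The density/approximation variant is the cleanest way to finesse it without appealing to Remark \ref{gradina}, and is what I would write up, since it keeps the proof self-contained within the objects that Proposition \ref{own} already controls.
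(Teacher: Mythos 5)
Your argument is correct and follows essentially the same strategy as the paper's: factor $\De(F)$ as $\Rep\big[\Sigma^B_\hb(F)\big]$, insert a cross term, and control the two resulting pieces by Proposition \ref{own} (via \eqref{return}) and the Rieffel-type continuity result from \cite{MP2}. The one genuine difference is the choice of cross term. The paper uses $\Rep\big[\Sigma^B_{\hb'}(F)\big]$ (fixed representation, varying symbol), so its second term $\big\|\big(\Rep-\Repp\big)\big[\Sigma^B_{\hb'}(F)\big]\big\|$ still carries an $\hb'$-dependent symbol and needs an extra approximation to freeze it before the \cite{MP2} statement can be applied. You instead use $\Rep\big[\Sigma^B_{\hb_0}(F)\big]$ (varying representation, fixed symbol), so your second term $\big|\|\Rep(G)\|-\|\mathfrak{Rep}^A_{\hb_0}(G)\|\big|$ involves a single fixed $G=\Sigma^B_{\hb_0}(F)$ and is directly the norm map whose continuity \cite{MP2} asserts — a marginally cleaner bookkeeping. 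Your discussion of the regularity of $G$ (either invoke Remark \ref{gradina} to get $G\in\S(\X\times\X)$, or approximate $G$ in $L^1(\X_y;C_0(\X_x))$ by Schwartz symbols using \eqref{return} and an $\varepsilon/3$ argument, the errors being uniform in $\hb$ by contractivity) correctly spells out what the paper compresses into the phrase "a simple approximation argument." Finally, note that you estimate $\big|\,\|\De(F)\|-\|\mathfrak D^A_{\hb_0}(F)\|\,\big|$ directly, which is exactly what the proposition requires, whereas \eqref{faitfroid} bounds the operator-norm difference $\|\De(F)-\Dep(F)\|$; since \cite{MP2} is quoted only for continuity of the scalar map $\hb\mapsto\|\Rep(G)\|$, your formulation is the more economical one, but both routes close up once the approximation step is in place.
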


\begin{proof}
We first recall that it has been proved in \cite{MP2} that the map
$I \ni \hb \mapsto \big\|\Rep(F)\big\| \in \R_+$ is continuous for any $F \in \S(\X\times\X)$.
Let $\hb,\hb' \in I$ and $F\in \S(\X\times \X)$. Then one has
\begin{eqnarray}\label{faitfroid}
\nonumber \big\|\De(F)-\Dep(F)\big\| &=& \big\|\Rep\big[\Sigma^B_\hb(F)\big]-\Repp\big[\Sigma^B_{\hb'}(F)\big]\big\| \\
&\leq& \big\|\Rep\big[\Sigma^B_\hb(F)-\Sigma^B_{\hb'}(F)\big]\big\| +
\big\|\big(\Rep-\Repp\big)\big[\Sigma^B_{\hb'}(F)\big]\big\|\ .
\end{eqnarray}
Since the inequality $\|\Rep(G)\| \leq \|G\|_{1,\infty}$ always holds, the first term of
\eqref{faitfroid} goes to $0$ as $\hb' \to \hb$ by Proposition \ref{own}.
The second term also vanishes as $\hb'\to\hb$ by the result of \cite{MP2} recalled above and by a simple
approximation argument. The statement then easily follows.
\end{proof}

\subsection{Von Neumann's condition}

One has to show that for any $F,G \in \S(\X \times \X)$ the following property holds:
\begin{equation*}
\lim_{\hb \to 0}\big\|
{\textstyle \frac{1}{2}} \big[
\De(F)\;\!\De(G) + \De(G)\;\!\De(F)\big] - \De(F\diamond_0^B G)
\big\|=0\ .
\end{equation*}
In fact, since $F\diamond^B_0 G = G\diamond^B_0 F$, it is enough to show that
\begin{equation*}
\lim_{\hb \to 0}\big\|
\De(F)\;\!\De(G)  - \De(F\diamond_0^B G) \big\|=0\ .
\end{equation*}
By taking the previous results into account, one has
\begin{eqnarray*}
\big\|\De(F)\;\!\De(G)- \De(F\diamond_0^B G)\big\| &=&
\big\|\Rep\big[\Sigma^B_\hb(F)\big]\;\!\Rep\big[\Sigma^B_\hb(G)\big]
-\Rep\big[\Sigma^B_\hb(F\diamond^B_0 G)\big]\big\| \\
&=& \big\|\Rep\big[\Sigma^B_\hb(F)\diamond^B_\hb \Sigma^B_\hb(G)-
\Sigma^B_\hb(F\diamond^B_0 G)\big]\big\| \\
&\leq& \big\|\Sigma^B_\hb(F)\diamond^B_\hb \Sigma^B_\hb(G)-
\Sigma^B_\hb(F\diamond^B_0 G)\big\|_{1,\infty}\\
&\leq& \big\|\Sigma^B_\hb(F) - F\big\|_{1,\infty} \;\!
\big\| \Sigma^B_\hb(G)\big\|_{1,\infty} +
\big\|\Sigma^B_\hb(G) - G\big\|_{1,\infty} \;\!
\| F\|_{1,\infty} \\
&&+\big\|\Sigma^B_\hb\big(F\diamond^B_0 G\big)-
F\diamond^B_0 G\big\|_{1,\infty} +
\big\|F\diamond^B_\hb G - F\diamond^B_0 G\big\|_{1,\infty}\ .
\end{eqnarray*}
It has been shown in Proposition \ref{own} that
$\big\|\Sigma^B_\hb(H) - H\big\|_{1,\infty}$ converges to $0$ as $\hb\to 0$ for any $H \in \S(\X\times \X)$.
Using this and the fact that $\Sigma^B_\hb$ is a contraction in $L^1\big(\X_y;C_0(\X_x)\big)$,
it follows that the first three terms above vanish as $\hb$ goes to $0$.
Finally, the convergence of $\big\|F\diamond^B_\hb G - F\diamond^B_0 G\big\|_{1,\infty}$ to $0$ as $\hb \to 0$
has been proved in \cite{MP2} in a more general context.

\subsection{Dirac's condition}

One has to show that for any $F,G \in \S(\X\times\X)$, the following result holds:
\begin{equation*}
\lim_{\hb \to 0} \big\|\textstyle{\frac{1}{i\hb}}
\big[\De(F), \De(G)\big]-\De\big(\bl F,G\br\big)\big\|=0,
\end{equation*}
which is equivalent to
\begin{equation*}
\lim_{\hb \to 0} \Big\|\textstyle{\frac{1}{i\hb}}\;\!
\Rep\Big(\Sigma^B_\hb(F) \diamond^B_\hb \Sigma^B_\hb(G) - \Sigma^B_\hb(G)\diamond^B_\hb \Sigma^B_\hb(F)\Big)-
\Rep\Big(\Sigma^B_\hb\big( \bl F,G \br \big)\Big\|=0 \ .
\end{equation*}
By taking into account the previous results, this reduces to showing that
\begin{equation*}
\lim_{\hb \to 0} \big\|\textstyle{\frac{1}{i\hb}}\;\!
\big(\Sigma^B_\hb(F) \diamond^B_\hb \Sigma^B_\hb(G) - \Sigma^B_\hb(G)\diamond^B_\hb \Sigma^B_\hb(F)\big)
-\bl F,G\br \big\|_{1,\infty}=0 \ .
\end{equation*}

For simplicity, let us denote by $V(a,b,c,d)$ the product $\overline{v(a)}v(b)\overline{v(c)}v(d)$
and let $\Gamma^B(a,b,c,d,e)$ be the flux of the magnetic field through the "pentagon" of vertices $a,b,c,d,e$.
With these notations one has
\begin{eqnarray*}
&&\big[\Sigma^B_\hb(F)\diamond^B_\hb \Sigma^B_\hb(G)\big](x,y) \\
&=&(2\pi)^{-N/2} \int_\X\int_\X\int_\X \d z \;\!\d a\;\! \d b\;\!F\big(x-\sqrt{\hb}a - \hsd (y-z),z\big)\;\!
G\big(x-\sqrt{\hb}b+\hsd z, y-z\big)\\
&&\cdot \ V\big(a+\rhsd z,a-\rhsd z,b+\rhsd (y-z),b -\rhsd (y-z) \big) \\
&&\cdot \
e^{-\frac{i}{\hb}\Gamma^B(x-\frac{\hb}{2} y,x-\sqrt{\hb}a - \frac{\hb}{2} (y-z), x-\frac{\hb}{2} y + \hb z,
x-\sqrt{\hb}b+\frac{\hb}{2} z,x+\frac{\hb}{2} y)} \ .
\end{eqnarray*}
Then, with some simple changes of variables it follows that
\begin{eqnarray*}
&&\big[\Sigma^B_\hb(F)\diamond^B_\hb \Sigma^B_\hb(G)-
\Sigma^B_\hb(G)\diamond^B_\hb \Sigma^B_\hb(F)\big](x,y) \\
&=&(2\pi)^{-N/2}\;\!\int_\X\int_\X\int_\X \d z \;\!\d a\;\! \d b\;\!  V\big(a+\rhsd z,a-\rhsd z,b+\rhsd (y-z),b -
\rhsd (y-z) \big) \\
&&\cdot \ \Big [
F\big(x-\sqrt{\hb}a - \hsd (y-z),z\big)\;\!G\big(x-\sqrt{\hb}b+\hsd z, y-z\big)\;\!
w_1^B(x,y,z,a,b;\hb)\\
&& \ - F\big(x-\sqrt{\hb}a+\hsd (y-z), z\big)\;\!G\big(x-\sqrt{\hb}b - \hsd z,y-z\big) \;\!
w_2^B(x,y,z,a,b;\hb) \Big]\ .
\end{eqnarray*}
with
$$
w_1^B(x,y,z,a,b;\hb):= e^{-\frac{i}{\hb}\Gamma^B(x-\frac{\hb}{2} y,x-\sqrt{\hb}a - \frac{\hb}{2} (y-z),
x-\frac{\hb}{2} y + \hb z,x-\sqrt{\hb}b+\frac{\hb}{2} z,x+\frac{\hb}{2} y)}
$$
and
$$
w_2^B(x,y,z,a,b;\hb):=e^{-\frac{i}{\hb}\Gamma^B(x-\frac{\hb}{2} y,x-\sqrt{\hb}b - \frac{\hb}{2} z, x+\frac{\hb}{2}
y - \hb z,x-\sqrt{\hb}a+\frac{\hb}{2} (y-z),x+\frac{\hb}{2} y)}\ .
$$
By using the Taylor development for $\varepsilon$ near $0$ :
\begin{eqnarray*}
F(x+\varepsilon \;\!y,z)&=& F(x,z)+\varepsilon \sum_j y_j \int_0^1 \d s \;\![\partial^{x_j} F](x+s\;\!\varepsilon\;\! y,z) \\
&=:&F(x,z)+ \L\big(F;x,\varepsilon y,z\big)\ ,
\end{eqnarray*}
the term between square brackets can be rewritten as the sum of the following four terms:
\begin{equation*}
I_1(x,y,z,a,b;\hb):=F\big(x-\sqrt{\hb}a,z\big)\;\!G\big(x-\sqrt{\hb}b, y-z\big)\;\!
\big[w_1^B(x,y,z,a,b;\hb)-w_2^B(x,y,z,a,b;\hb)\big]\ ,
\end{equation*}
\begin{eqnarray*}
&&I_2(x,y,z,a,b;\hb):=F(x-\sqrt{\hb}a,z)\cdot \\
&&\ \cdot \Big[ \L\big(G;x-\sqrt{\hb}b,\hsd z,y-z\big)\;\!
w_1^B(x,y,z,a,b;\hb) - \L\big(G;x-\sqrt{\hb}b, - \hsd z,y-z\big) \;\! w_2^B(x,y,z,a,b;\hb) \Big]\ ,
\end{eqnarray*}
\begin{eqnarray*}
&&I_3(x,y,z,a,b;\hb):=G(x-\sqrt{\hb}b,y-z)\cdot \\
&&\ \cdot \Big[\L\big(F;x-\sqrt{\hb}a, - \hsd (y-z),z\big)\;\!
w_1^B(x,y,z,a,b;\hb) - \L\big(F;x-\sqrt{\hb}a,\hsd (y-z),z\big)\;\!
w_2^B(x,y,z,a,b;\hb)\Big]
\end{eqnarray*}
and
\begin{eqnarray*}
I_4(x,y,z,a,b;\hb)&:=&
\L\big(F;x-\sqrt{\hb}a, - \hsd (y-z),z\big)\;\!\L\big(G;x-\sqrt{\hb}b,\hsd z,y-z\big) \;\! w_1^B(x,y,z,a,b;\hb) \\
&& - \L\big(F;x-\sqrt{\hb}a,\hsd (y-z),z\big)\;\!\L\big(G;x-\sqrt{\hb}b, - \hsd z,y-z\big)\;\!
w_2^B(x,y,z,a,b;\hb)\ .
\end{eqnarray*}

The term $I_1$ is going to be studied below. Then, observe that $I_2(x,y,z,a,b;\hb) + I_3(x,y,z,a,b;\hb)$ is equal to
\begin{eqnarray*}
&&\hsd \;\!F(x-\sqrt{\hb}a,z) \sum_j z_j \int_0^1\d s\Big[ [\partial_{x_j} G]\big(x-\sqrt{\hb}b+\hsd s z,y-z\big)
\;\!w_1^B(x,y,z,a,b;\hb) \\
&&\qquad + [\partial_{x_j} G]\big(x-\sqrt{\hb}b - \hsd sz,y-z\big) \;\! w_2^B(x,y,z,a,b;\hb)\Big] \\
&&- \hsd \;\! G(x-\sqrt{\hb}b,y-z)\sum_j (y_j-z_j) \int_0^1\d s\Big[ [\partial_{x_j} F]\big(x-\sqrt{\hb}a -
\hsd s(y-z),z\big)\;\!w_1^B(x,y,z,a,b;\hb)\\
&&\qquad +[\partial_{x_j} F]\big(x-\sqrt{\hb}a+\hsd s(y-z),z\big)\;\!
w_2^B(x,y,z,a,b;\hb)\Big]\ .
\end{eqnarray*}
Furthermore, the term $I_4(x,y,z,a,b;\hb)$ clearly belongs to $O(\hb^2)$, for fixed $x,y,z,a$ and $b$.
So, let us now concentrate on the main part of $I_1$ :

\begin{lemma}
For fixed $x,y,z,a$ and $b$ one has
\begin{equation}\label{ouf}
\lim_{\hb \to 0} {\textstyle \frac{1}{i\hb}}\big[w_1^B(x,y,z,a,b;\hb)-w_2^B(x,y,z,a,b;\hb)\big] =
-\sum_{j,k}z_j\;\!(y_k-z_k)\;\!B_{jk}(x)\ .
\end{equation}
\end{lemma}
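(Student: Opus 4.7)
The plan is to Taylor expand in $\sqrt{\hb}$ the two magnetic ``pentagon fluxes'' appearing in the exponents of $w_1^B$ and $w_2^B$, which we denote $\Gamma_1$ and $\Gamma_2$; show that their $\hb^{3/2}$ parts coincide so that $\Gamma_1-\Gamma_2=O(\hb^2)$; compute the $\hb^2$ part of $\Gamma_1-\Gamma_2$ explicitly; and finally expand the exponentials to extract the limit. All five vertices of either pentagon lie at distance $O(\sqrt{\hb})$ of $x$, which makes such a Taylor expansion natural.

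Triangulate each pentagon from one of its vertices into three triangles. Using the elementary fact that, for a triangle $\langle p_1,p_2,p_3\rangle$ with edges $u=p_2-p_1$, $w=p_3-p_1$, Taylor expansion of $B$ around $x$ inside the $2$-simplex integral gives
\[
\Gamma^B\langle p_1,p_2,p_3\rangle = \tfrac{1}{2}\;\!B(x)(u,w) + \bigl[\tfrac{1}{2}(p_1-x)+\tfrac{1}{6}(u+w)\bigr]\cdot\nabla B(x)(u,w) + O(|p_j-x|^4),
\]
summing over the three triangles yields expansions $\Gamma_i = c_i\hb^{3/2} + d_i\hb^2 + O(\hb^{5/2})$, $i=1,2$. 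A direct calculation using the antisymmetry $B_{jk}=-B_{kj}$ then shows $c_1=c_2$. For the $\hb^2$ coefficients, the leading $\tfrac{1}{2}B(x)(u,w)$ contribution summed around both pentagons yields a net $-B(x)(y,z) = \sum_{j,k}z_j(y_k-z_k)B_{jk}(x)$ to $d_1-d_2$; the $\tfrac{1}{6}(u+w)\cdot\nabla B(x)(u,w)$ correction produces additional $O(\hb^2)$ contributions to each $d_i$ individually, but a careful index relabelling (using again $B_{jk}=-B_{kj}$) shows that they cancel pairwise in the difference $d_1-d_2$. Contributions involving $\nabla^2 B$ or $(p_1-x)\cdot\nabla B$ are $O(\hb^{5/2})$, hence irrelevant.

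With $c_1=c_2=:c$ and $d_1-d_2 = \sum_{j,k}z_j(y_k-z_k)B_{jk}(x)$ in hand, factoring $e^{-ic\sqrt{\hb}}$ out of both exponentials yields
\[
w_1^B-w_2^B = e^{-ic\sqrt{\hb}}\bigl[-i\;\!(d_1-d_2)\;\!\hb + O(\hb^{3/2})\bigr],
\]
so that $(i\hb)^{-1}(w_1^B-w_2^B)$ tends to $-(d_1-d_2) = -\sum_{j,k}z_j(y_k-z_k)B_{jk}(x)$ as $\hb\to 0$, which is the claim. The main obstacle is the second step: the bookkeeping of the six $\nabla B(x)$ contributions of the form $\tfrac{\hb^2}{6}(\partial_i B_{jk})(x)(\cdot)_i(\cdot)_j(\cdot)_k$ (three per pentagon), and verifying their pairwise cancellation in $d_1-d_2$ via the antisymmetry of $B$; this cancellation is precisely what ensures that the limit involves only $B(x)$ itself and not its derivatives, in accordance with the stated formula.
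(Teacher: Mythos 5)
Your proof is correct but follows a genuinely different route from the paper's. The paper never expands the pentagon fluxes $\Gamma_1,\Gamma_2$ themselves: noting $|w^B_j|=1$, it writes $w^B_1-w^B_2=w^B_1\bigl(1-(w^B_1)^{-1}w^B_2\bigr)$, factors the ratio $(w^B_1)^{-1}w^B_2 = L^B_1L^B_2L^B_3$ as a product of three exponentials of \emph{triangle} fluxes, and then telescopes $\frac{1}{i\hb}\bigl[1-L^B_1L^B_2L^B_3\bigr]$ into three pieces; parametrizing the triangle fluxes explicitly one finds that $\frac{1}{i\hb}[1-L^B_1]$ alone produces the stated limit, while the exponents of $L^B_2,L^B_3$ are $o(\hb)$ (the $a_ja_k$-- resp.\ $b_jb_k$--terms being killed by antisymmetry of $B$), so $\frac{1}{i\hb}[1-L^B_{2,3}]\to 0$. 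Your approach is the more elementary one: Taylor-expand the pentagon fluxes themselves in $\sqrt{\hb}$, check $c_1=c_2$, and compute $d_1-d_2$. The same cancellations by $B_{jk}=-B_{kj}$ occur in both proofs, but they are packaged differently: in the paper they make the peripheral factors $L^B_2,L^B_3$ harmless, in your version they make $c_1=c_2$ and kill the $\nabla B$ contributions to $d_1-d_2$. What the paper's factorization buys is structural transparency (the answer sits entirely in the ``core'' triangle flux $L^B_1$, everything else is a remainder); what your version buys is directness, at the cost of heavier index bookkeeping at order $\hb^2$, which you rightly identify as the main computational obstacle. One small precision worth adding: you state that all five vertices lie at distance $O(\sqrt\hb)$ from $x$, but three of them are in fact at distance $O(\hb)$, and your estimate $(p_1-x)\cdot\nabla B(x)(u,w)=O(\hb^{5/2})$ requires triangulating from one of \emph{those}, e.g.\ from $x-\frac{\hb}{2}y$, the common first vertex of both pentagons; a triangulation from one of the two $O(\sqrt\hb)$-vertices would put this term at $O(\hb^2)$ and spoil the bookkeeping.
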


\begin{proof}
Since $|w_j^B|=1$ one has $w_1^B-w_2^B=w_1^B(1-(w_1^B)^{-1}\;\!w_2^B)$. Furthermore, one has
\begin{eqnarray*}
&& w_1^B(x,y,z,a,b;\hb)^{-1}\;\!w_2^B(x,y,z,a,b;\hb) \\
&=&e^{-\frac{i}{\hb}\Gamma^B\langle x+\frac{\hb}{2}y,x-\frac{\hb}{2}y+\hb z,x-\frac{\hb}{2}y\rangle}\;\!
e^{-\frac{i}{\hb}\Gamma^B\langle
x-\frac{\hb}{2}y,x+\frac{\hb}{2}y-\hb z,x+\frac{\hb}{2}y\rangle} \ \cdot \\
&& \cdot \ e^{-\frac{i}{\hb}\Gamma^B\langle x-\frac{\hb}{2}y+\hb z ,x-\sqrt{\hb}a -\frac{\hb}{2}(y-z),
x-\frac{\hb}{2}y\rangle}\;\! e^{-\frac{i}{\hb}\Gamma^B\langle
x+\frac{\hb}{2}y-\hb z,x-\sqrt{\hb}a+\frac{\hb}{2}(y-z) ,x+\frac{\hb}{2}y\rangle} \ \cdot \\
&& \cdot \ e^{-\frac{i}{\hb}\Gamma^B\langle x+\frac{\hb}{2}y ,x-\sqrt{\hb}b +\frac{\hb}{2}z,
x-\frac{\hb}{2}y+\hb z\rangle}\;\! e^{-\frac{i}{\hb}\Gamma^B\langle
x-\frac{\hb}{2}y,x-\sqrt{\hb}b-\frac{\hb}{2}z ,x+\frac{\hb}{2}y-\hb z\rangle}\\
&=:&[L^B_1\cdot L^B_2\cdot L^B_3](x,y,z,a,b;\hb)\ .
\end{eqnarray*}
By using the standard parametrization of the flux through triangles, one then obtains
\begin{eqnarray*}
L^B_1(x,y,z,a,b;\hb)&:=&\exp\Big\{-i\hb\sum_{j,k}(y_j-z_j)\;\!z_k\int_0^1\d \mu \int_0^1\d \nu\;\!\mu \ \cdot \\
&&\quad \cdot \ \Big[B_{jk}\big(x+\hsd y-\mu \hb (y-z)-\mu\nu \hb z\big)
+B_{jk}\big(x-\hsd y+\mu \hb (y-z)-\mu\nu \hb z\big)
\Big]\Big\}\ ,
\end{eqnarray*}
\begin{eqnarray*}
&&L^B_2(x,y,z,a,b;\hb):=\exp\Big\{i\sum_{j,k}\big(a_j+\rhsd z_j\big) \;\!\big(a_k-\rhsd z_k\big)\int_0^1\d \mu
\int_0^1\d \nu\;\!\mu \ \cdot \\
&&\quad \cdot \ \Big[B_{jk}\big(x-\mu \sqrt{\hb} a(1-\nu) -\hsd\big[ y-z(2 -\mu -\mu\nu)\big]\big)
-B_{jk}\big(x-\mu\sqrt{\hb} a (1-\nu) + \hsd \big[y- z(2-\mu - \mu\nu)\big]\big)
\Big]\Big\}
\end{eqnarray*}
and
\begin{eqnarray*}
&&L^B_3(x,y,z,a,b;\hb):=\exp\Big\{i\sum_{j,k}\big(b_j+\rhsd (y_j-z_j\big) \;\!\big(b_k-\rhsd (y_k-z_k)\big)
\int_0^1\d \mu \int_0^1\d \nu\;\!\mu \ \cdot \\
&& \quad\cdot \ \Big[B_{jk}\big(x  -\mu\sqrt{\hb}b(1-\nu) +\hsd \big[y - (y-z) \mu(1 +\nu)\big]\big) -
B_{jk}\big(x-\mu\sqrt{\hb}b  (1-\nu)  -\hsd\big[ y -(y-z) \mu (1 +\nu)\big] \big)
\Big]\Big\} \ .
\end{eqnarray*}

Now, let us observe that
\begin{eqnarray*}
{\textstyle \frac{1}{i\hb}}[w^B_1-w^B_2]&=& (w^B_1)^{-1}\;\!{\textstyle \frac{1}{i\hb}}[1-L^B_1\;\!L_2^B\;\!L^B_3] \\
&=& (w^B_1)^{-1}\;\!{\textstyle \frac{1}{i\hb}}[1-L^B_1] + (w^B_1)^{-1}\;\!L^B_1 \;\!{\textstyle \frac{1}{i\hb}}[1-L^B_2]+
(w^B_1)^{-1}\;\!L^B_1 \;\!L^B_2\;\!{\textstyle \frac{1}{i\hb}}[1-L^B_3]\ .
\end{eqnarray*}
By taking the limit $\hb \to 0$ and by taking the equality $B_{jk}=-B_{kj}$ into account,
the first term leads to the r.h.s.~of \eqref{ouf}. For the other two terms, by a Taylor development of
the magnetic field one easily obtains that their limit as $\hb \to 0$ is null.
\end{proof}

By adding these different results, one can now prove:
\begin{proposition}[Dirac's condition]
For any $F,G \in \S(\X\times\X)$, the following property holds:
\begin{equation*}
\lim_{\hb \to 0} \big\|\textstyle{\frac{1}{i\hb}}
\big[\De(F), \De(G)\big]-\De\big(\bl F,G\br\big)\big\|=0\ .
\end{equation*}
\end{proposition}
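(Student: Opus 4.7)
The plan is to work inside the realization $\De=\Rep\circ\Sigma^B_\hb$ and to reduce operator-norm convergence to convergence in the norm $\|\cdot\|_{1,\infty}$ of $L^1\big(\X_y;C_0(\X_x)\big)$, exploiting $\|\Rep(H)\|\le\|H\|_{1,\infty}$ together with Proposition \ref{own}. The cost of replacing $\De(\bl F,G\br)$ by $\Rep(\bl F,G\br)$ is $\big\|\Sigma^B_\hb(\bl F,G\br)-\bl F,G\br\big\|_{1,\infty}$, which vanishes as $\hb\to 0$ by Proposition \ref{own}. The problem is thus reduced to
$$
\lim_{\hb\to 0}\Big\|\tfrac{1}{i\hb}\big(\Sigma^B_\hb(F)\diamond^B_\hb\Sigma^B_\hb(G)-\Sigma^B_\hb(G)\diamond^B_\hb\Sigma^B_\hb(F)\big)-\bl F,G\br\Big\|_{1,\infty}=0,
$$
which is the identity targeted in the excerpt.

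I would then invoke the decomposition $I_1+I_2+I_3+I_4$ of the integrand. By the preceding lemma, $I_1/(i\hb)$ converges pointwise in $x,y,z,a,b$ to
$$
-\;\!F(x,z)\;\!G(x,y-z)\;\!|v(a)|^2\;\!|v(b)|^2\sum_{j,k}z_j\;\!(y_k-z_k)\;\!B_{jk}(x),
$$
which, after integrating over $a,b$ (using $\|v\|_2=1$) and then over $z$, produces exactly the magnetic piece $-\sum_{j,k}B_{jk}\;\!(Y_jF)\diamond^B_0(Y_kG)$ of $\bl F,G\br$. In $(I_2+I_3)/(i\hb)$ the explicit $\hsd$ prefactor carried by each $\L$-term cancels the $1/\hb$; since $w_j^B\to 1$ and the Taylor integrals $\int_0^1\d s$ collapse to unshifted evaluations of $\partial_{x_j}G$ and $\partial_{x_j}F$, integration over $a,b,z$ produces the remaining two non-magnetic pieces of $\bl F,G\br$. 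Finally $I_4/(i\hb)=O(\hb)$ pointwise, since $I_4$ carries two $\hsd$ prefactors. Adding the pointwise limits and matching with the explicit formula for $\bl F,G\br$ recalled in the section, the total pointwise limit is $\bl F,G\br(x,y)$.

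The remaining step, which is the main technical obstacle, is to upgrade these pointwise limits to $\|\cdot\|_{1,\infty}$ convergence. I would construct, uniformly in $\hb\in(0,1]$, an integrable-in-$y$ majorant for $\sup_{x\in\X}$ of each of $I_1/(i\hb),\dots,I_4/(i\hb)$. Two ingredients make this feasible: first, the phase factors $w_j^B$ and their ratios $L_j^B$ have modulus $1$, and the Taylor representations already appearing in the lemma express each $(1-L_j^B)/(i\hb)$ as an explicit integral of the $B_{jk}$'s multiplied by polynomials in $(a,b,z,y)$, bounded uniformly in $x$ and $\hb$ thanks to $B_{jk}\in BC^\infty(\X)$; second, the Schwartz character of $F,G,v$ together with the factor $V\big(a+\rhsd z,a-\rhsd z,b+\rhsd(y-z),b-\rhsd(y-z)\big)$ supplies rapid decay in $a,b,z$ and in $y$ that absorbs those polynomials and produces an $L^1_y$-tail uniformly in $x$ and $\hb$. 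Dominated convergence then yields the desired limit, and combined with the reduction above completes the proof.
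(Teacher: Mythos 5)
Your proposal reproduces the paper's proof essentially step by step: the reduction to the $\|\cdot\|_{1,\infty}$ estimate via $\De=\Rep\circ\Sigma^B_\hb$ and Proposition \ref{own}, the $I_1+I_2+I_3+I_4$ decomposition after Taylor expansion, the lemma on $\tfrac{1}{i\hb}[w_1^B-w_2^B]$, and the concluding dominated-convergence argument. You actually give slightly more detail than the paper (which merely writes that the proof ``consists in numerous applications of the Dominated Convergence Theorem and in various approximations as in Proposition \ref{own}''), identifying explicitly the pointwise limits of each $I_k/(i\hb)$ with the three pieces of $\bl F,G\br$ and sketching the uniform-in-$\hb$ $L^1_y$ majorants.
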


\begin{proof}
By considering the results obtained above, the proof simply consists in numerous applications of the
Dominated Convergence Theorem and in various approximations as in Proposition \ref{own}.
The normalization $\|v\|_{L^2(\X)}=1$ should also been taken into account.
\end{proof}


\end{document}